\newcommand{\mF}{\mathcal{F}}
\newcommand{\mG}{\mathcal{G}}
\newcommand{\mH}{\mathcal{H}}
\newcommand{\mO}{\mathcal{O}}
\newcommand{\mP}{\mathcal{P}}
\newcommand{\mR}{\mathcal{R}}
\newcommand{\mS}{\mathcal{S}}
\newcommand{\mT}{\mathcal{T}}
\newcommand{\mU}{AB}
\newcommand{\mV}{AC}
\newcommand{\mW}{BC}
\newcommand{\mX}{\mathcal{X}}
\newcommand{\mY}{\mathcal{Y}}
\newcommand{\mZ}{\mathcal{Z}}
\newcommand{\sW}{\mathscr{W}}
\newtheorem{myproperty}{Property}{\bf}{\it}
\newcommand{\qed}{}
\begin{document}
\markboth{S. Mehrpour, A. Zarei}
{Pseudo-Triangle Visibility Graph: Characterization and Reconstruction}

\catchline

\title{Pseudo-Triangle Visibility Graph: Characterization and Reconstruction}

\author{Sahar Mehrpour\ \ \ \ \ \ \ \  Alireza Zarei}

\address{Department of Mathematical Sciences, Sharif University of Technology}

\maketitle


\begin{abstract}
The visibility graph of a simple polygon represents visibility relations between its vertices.
Knowing the correct order of the vertices around the boundary of a polygon and its visibility graph, it is an open problem to locate the vertices in a plane in such a way that it will be consistent with this visibility graph. This problem has been solved for special cases when we know that the target polygon is a {\it tower} or a {\it spiral}.
Knowing that a given visibility graph belongs to a simple polygon with at most three concave chains on its boundary, a {\it pseuodo-triangle}, we propose a linear time algorithm for reconstructing one of its corresponding polygons.
Moreover, we introduce a set of necessary and sufficient properties for characterizing visibility graphs of pseudo-triangles and propose polynomial algorithms for checking these properties. 
\end{abstract}

\keywords{
Computational geometry; Visibility graph; Characterizing visibility graph; Polygon reconstruction; Pseudo-triangle.
}

\section{Introduction}\label{sec:Introduction}
The {\it visibility graph} of a simple polygon $\mP$ is a graph $\mG(V,E)$ where $V$ is the vertices of $\mP$ and an edge $(u,v)$ exists in $E$ if and only if the line segment $uv$ lies completely inside $\mP$, i.e they are visible from each other.
Based on this definition, each pair of adjacent vertices on the boundary of the polygon are assumed to be visible from each other. This implies that we have always a Hamiltonian cycle in a visibility graph which determines the order of vertices on the boundary of the corresponding polygon.

Computing the visibility graph of a given simple polygon has many applications in computer graphics, computational geometry and robotics. There are several efficient polynomial time algorithms for this problem. Asano~{\it et al.}~\cite{MS10} and Welzl~\cite{MS64} proposed $\mO(n^2)$ time algorithms for computing the visibility graph of a simple polygon of $n$ vertices. This was then improved to $\mO(m+n\log\log n)$ by Hershberger\cite{MS41} where $m$ is the number of edges in the visibility graph. The term $n\log\log n$ is due to the time required for triangulating a simple polygon. Using the $\mO(n)$ time triangulation algorithm of Chazelle~\cite{MS20} reduces the time complexity of Hershberger's result to $\mO(m+n)$ which is optimal. 

This concept has been studied in reverse as well: Characterizing a visibility graph problem is to determine whether a give graph is isomorphic to the visibility graph of some simple polygon, and the reconstruction problem is to build such a simple polygon.
Everett showed that these probleme are in PSPACE\cite{Eve90}, and this is the only result known about the complexity of these problems.
These problems have been solved only for special cases of {\it spiral} and {\it tower} polygons. These results are obtained by Everett and Corneil~\cite{EveCor90} for spiral polygons and by Colley~{\it et al.}~\cite{CoLuSp97} for tower polygons. In spiral polygons there is at most one concave chain (Fig.~\ref{fig:kinds}a) and the boundary of a tower polygon is composed of two concave chains and a single edge (Fig.~\ref{fig:kinds}b).

Although there is a bit progress on this type of reconstruction problem, there have been plenty of studies on characterizing visibility graphs~\cite{AbEK95,CoLuSp97,CouLub92,ElG85,EveCor90,Gh88}. 
In 1988, Ghosh introduced three necessary conditions for visibility graphs and conjectured their sufficiency\cite{Gh88}.
In 1990, Everett proposed a counter-example graph disproving Ghosh's conjecture\cite{Eve90}. She also refined Ghosh's third necessary condition to a new stronger condition\cite{Gh97}.
In 1992, Abello~{\it et al.} built a graph satisfying Ghosh's conditions and the stronger version of the third condition which was not the visibility graph of any simple polygon, disproving the sufficiency of these conditions\cite{AbLinPi92}.
In 1997, Ghosh added his fourth necessary condition and conjectured that this condition along with his first two conditions and the stronger version of the third condition are sufficient for a graph to be a visibility graph. This was also disproved by  a counter-example from Streinu in 2005\cite{Str05}.

There are also several results about other versions of the visibility graph like point visibility graph\cite{SCG00,SCG10,SCG14}, pseudo-polygon visibility\cite{AX,AX12,AX13,AX10} and terrain visibility\cite{MS00}. Excellent surveys on visibility graph results can be found in \cite{MS37,ghosh-book,MS49}.

\begin{figure}
\centering
\includegraphics[scale =0.8]{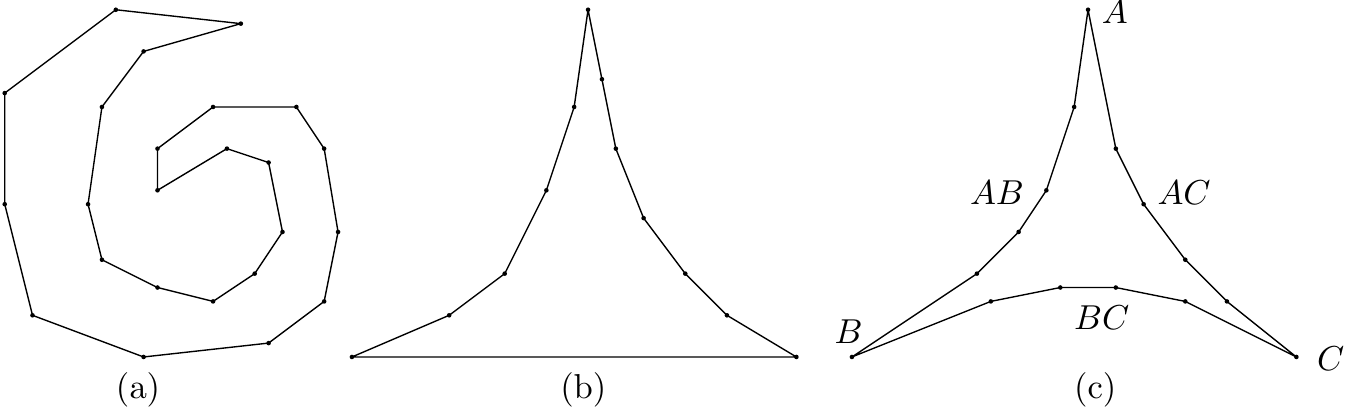} 
\caption{(a) A Spiral polygon, (b) a tower polygon, and (c) a pseudo-triangle.}
\label{fig:kinds}
\end{figure}

In this paper, we solve the reconstruction problem for pseudo-triangles.
A {\it pseudo-triangle} is a simple polygon whose boundary is composed of three concave chains, called {\it side-chains}, where each pair shares one convex vertex (called a corner).
Let $\mP$ be a pseudo-triangle formed by the concave side-chains $\mU=[A,\ldots,B]$, $\mV=[A,\ldots,C]$, and $\mW=[B,\ldots,C]$ where $A$, $B$, and $C$ are the corners (Fig.~\ref{fig:kinds}c). 
According to this notation, a concave side-chain joining corner vertices $X$ and $Y$ is denoted by $XY$ where $X$ and $Y$ are in $\{A,B,C\}$.

Let $\mH=\ <A,\ldots,C,\ldots,B,\ldots,A>$ be the Hamiltonian cycle of the visibility graph of $\mP$ which indicates the order of vertices on the boundary of $\mP$. Here, we use the same notation for a vertex on the boundary of $\mP$ and its corresponding vertex in the visibility graph and $\mH$. For a given pair of Hamiltonian cycle $\mH$ and visibility graph $\mG(V,E)$, we introduce a set of necessary properties on $\mH$ and $\mG$ when this pair belongs to a pseudo-triangle
and prove that these properties are sufficient as well.

Having these properties, we propose a linear-time algorithm for reconstructing a pseudo-triangle 
$\mP=<A,\ldots,C,\ldots,B,\ldots,A>$ with $\mG(V,E)$ as its visibility graph. 
Moreover, we propose algorithms for verifying the properties on a given pair of $\mH$ and $\mG$. These characterizing algorithms run in linear time in terms of the size of $\mG$. Therefore, in this paper we solve the characterizing and reconstructing problems for another class of polygons called pseudo-triangles.

Since a tower polygon is a special case of a pseudo-triangle, we use the tower reconstruction algorithm~\cite{CoLuSp97} as a sub-routine in our algorithm to build the initial part of the polygon. 

Our motivation in solving this problem for pseudo-triangles is that every polygon can be partitioned into pseudo-triangles. Then, an idea for solving a general reconstruction problem is to handle these steps:
\begin{itemize}
\itemsep0.5pt
\item Recognize a pseudo-triangle decomposition for the target polygon from $\mG(V,E)$ and $\mH$.
\item Reconstruct each pseudo-triangle separately.
\item Attach the reconstructed pseudo-triangles satisfying the pseudo-triangle decomposition and the visibility constraints.
\end{itemize}

In Section~\ref{sec:Tower}, we briefly describe the tower reconstruction algorithm~\cite{CoLuSp97} for reconstructing tower polygons which is used as a sub-routine in our algorithm. In Section~\ref{sec:Properties}, we introduce a set of necessary conditions(properties) of the visibility graph of pseudo-triangles and in Section~\ref{sec:reconstruction}, we prove sufficiency of these conditions by proposing a reconstruction algorithm. Finally, we analyze the running time of  the reconstruction algorithm and the algorithms required to check the properties.

\section{Reconstructing Tower Polygons}\label{sec:Tower}
A {\it strong ordering} on a bipartite graph $\mG(V,E)$ with partitions $U$ and $W$ is a pair, $<_U$ and $<_W$, of orderings on respectively ({\it resp.}) $U$ and $W$ such that if $u<_U u'$, $w<_W w'$, and there are edges $(u,w')$ and $(u',w)$ in $E$, the edges $(u,w)$ and $(u',w')$ also exist in $E$.

The following theorem by Colley~{\it et al.}~\cite{CoLuSp97} indicates the main property of the visibility graph of a tower polygon and guarantees the existence of a tower polygon consistent with such a visibility graph.

\begin{theorem}{\cite{CoLuSp97}}\label{thm:tower}
Removing the edges of the reflex chains from the visibility graph of a tower gives an isolated vertex plus a connected bipartite graph for which the ordering of the vertices in the partitions provides a strong ordering. Conversely, any connected bipartite graph with strong ordering belongs to a tower polygon. Furthermore, such a tower can be constructed in linear time in terms of the number of vertices.
\end{theorem}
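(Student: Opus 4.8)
The plan is to establish the three assertions of the theorem in turn: necessity of the stated structure, sufficiency via an explicit construction, and the linear running time of that construction.

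For the necessity direction, fix a tower $\mP$ with apex $t$, whose boundary consists of two reflex (concave) chains $L=[t,u_1,\dots,u_p]$ and $R=[t,w_1,\dots,w_q]$ meeting at $t$, together with the single edge $u_pw_q$. The starting observation is that a reflex chain is self-occluding: a vertex of such a chain sees no vertex of that chain other than its two neighbours, since the chain is dented toward the interior of $\mP$. Consequently, after deleting from $\mG$ the edges of $L$ and of $R$, every surviving edge joins a vertex of $U=\{u_1,\dots,u_p\}$ to a vertex of $W=\{w_1,\dots,w_q\}$ (the edge $u_pw_q$ included), so the remaining graph is bipartite with these parts; the same occlusion argument at $t$ shows that $t$ sees only $u_1$ and $w_1$, hence $t$ is the isolated vertex. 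Connectedness needs only a short argument: the edge $u_pw_q$ links the two parts and propagating visibility along the chains joins every other vertex to it. The substantive point is the strong ordering for the along-chain orders $<_U$ and $<_W$. Given $u_i<_Uu_j$ and $w_k<_Ww_\ell$ with $(u_i,w_\ell),(u_j,w_k)\in E$, the four vertices occur along the boundary in the cyclic order $u_i,u_j,w_\ell,w_k$, so the segments $u_iw_\ell$ and $u_jw_k$ cross at some point $x\in\mP$; then the two-segment paths $u_ix\cup xw_k$ and $u_jx\cup xw_\ell$ lie in $\mP$, and a standard uncrossing argument---using that the boundary arcs between $u_i,u_j$ and between $w_k,w_\ell$ are reflex---shows that the straight segments $u_iw_k$ and $u_jw_\ell$ lie in $\mP$ as well, i.e.\ $(u_i,w_k),(u_j,w_\ell)\in E$, which is precisely the strong-ordering condition.

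For the sufficiency direction, let $\mG$ be a connected bipartite graph with parts $U$, $W$ and a strong ordering $<_U$, $<_W$. A preliminary step (a direct application of the strong-ordering condition together with connectedness) shows that the neighbourhood of each vertex is an interval of the opposite order---the ``staircase'' pattern that is the combinatorial silhouette of a tower. I then build a tower incrementally: place the apex $t$, and process the remaining vertices in a suitable merged order, growing the two reflex chains away from $t$ and appending each new vertex just beyond the current tip of the chain it belongs to. When a vertex $x$, say of $U$, is appended, its position is chosen inside a thin wedge so that the left chain stays reflex and $x$ sees exactly the interval of already-placed $W$-vertices prescribed by $E$, and nothing more. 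The crucial claim is that this wedge is always nonempty; this is where the strong ordering enters, ensuring that the set of $W$-vertices $x$ must see is an initial (or final) segment of the set visible from the tip region, so a legal placement exists. Once the last vertices are placed, the polygon is closed by the edge joining the two chain tips, and an induction over the construction confirms that the resulting shape is a tower whose visibility graph is $\mG$.

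For the running time, the construction visits each vertex once, and with the interval representation of neighbourhoods together with pointers to the two current chain tips and to the interval visible from the active tip, each step costs $O(1)$ beyond reading the input orderings, for a total of $O(|V|)$. I expect the main obstacle to be the sufficiency construction---specifically, the proof that the wedge into which each new vertex is placed is always nonempty, since this is the step that converts a purely combinatorial hypothesis into a geometric feasibility statement and requires carrying a careful package of invariants through the induction: reflexivity of both chains, exactness of every visibility among the placed vertices, and monotonicity of the visible interval at the active tip. By comparison, the crossing argument for necessity and the running-time bookkeeping are routine.
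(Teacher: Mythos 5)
First, note that the paper does not prove Theorem~\ref{thm:tower} at all: it is imported verbatim from Colley et al., and Section~\ref{sec:Tower} only sketches their reconstruction algorithm, so there is no in-paper proof to match against. Measured against that sketch, your sufficiency construction is essentially the same algorithm --- incremental placement of each new vertex on a constrained half-line/wedge chosen so that it sees exactly the prescribed interval of the opposite chain and keeps its own chain reflex --- and you correctly isolate the one claim that carries all the weight, namely that the wedge is never empty. Since that claim is asserted rather than proved, the sufficiency direction remains a program rather than a proof, but it is the right program and the right place to concentrate the effort.

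The concrete gap is in your necessity argument for the strong ordering. You claim that once $u_iw_\ell$ and $u_jw_k$ cross at a point $x$, ``a standard uncrossing argument'' yields $u_iw_k,\,u_jw_\ell\subset\mP$. In a general simple polygon this implication is false: two crossing visibility segments do not force the remaining two endpoint pairs to be mutually visible, because a spike of the boundary can enter the triangle $u_jxw_\ell$ through its side $u_jw_\ell$ while avoiding the two legs that meet at $x$. The tower structure must therefore be invoked in an essential way, and the parenthetical remark that the boundary arcs between $u_i,u_j$ and between $w_k,w_\ell$ are reflex does not by itself exclude, say, a vertex $u_m$ with $m>j$ protruding into that triangle. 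The clean route --- and the one this paper itself follows for pseudo-triangles --- is to first establish that every chain vertex sees a contiguous, nonempty interval of the opposite chain (the analogue of Property~\ref{prop:2}) and that the endpoints of these intervals vary monotonically along the chain (the analogue of Corollary~\ref{cor:1}); the strong ordering is then an immediate index computation. Either supply that intermediate structure or replace the uncrossing step by an explicit blocking-vertex/geodesic argument; as written, this step would not survive scrutiny.
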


\begin{figure}
\centering
\includegraphics[scale =1.0]{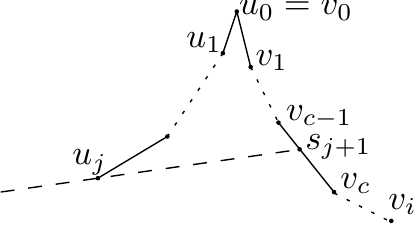} 
\caption{Constructing a tower polygon.}
\label{fig:tower_polygon}
\end{figure}

The outline of the reconstruction algorithm proposed by Colley~{\it et al.}~\cite{CoLuSp97} is as follows. As input, it takes the corner vertex $A=u_0=v_0$ and  a connected bipartite graph $\mG(V,E)$ with vertices partitioned into two independent sets $\mU=\{u_1,\ldots,u_m\}$ and $\mV=\{v_1,\ldots,v_n\}$ having strong ordering. 

In the first step, the position of the corner $A$ and the vertices $u_1$ and $v_1$ are determined as in Fig.~\ref{fig:tower_polygon}. 
In a middle step, suppose that the positions of the vertices $u_0,\ldots,u_{j-1}$ and $v_0,\ldots,v_{k-1}$ have been determined and the directions of the half-lines from $u_{j-1}$ and $v_{k-1}$ which respectively contain $u_j$ and $v_k$, where $(u_j,v_k)\in E$,  are also known. To complete such a middle step, position of $u_j$ and the half-line from $u_j$ which contains $u_{j+1}$ (where $u_{j+1}$ is visible from $v_k$) must be determined.
For this purpose, $u_j$ is located somewhere on its containing half-line horizontally below the vertex $v_c$ which $v_c$ has the minimum index among vertices of $\mV$ which are visible from $u_{j+1}$.
Then, the containing half-line of $u_{j+1}$ will be the half-line on the supporting line of $u_j$ and $s_{j+1}$ downward from $u_j$. Here, $s_{j+1}$ is a point on $v_{c-1}v_c$ with an $\epsilon$ distance below $s_j$, when $s_j$ lies on $v_{c-1}v_c$. If $s_j$ does not lie on $v_{c-1}v_c$, then $s_{j+1}$ is a point on $v_{c-1}v_c$ with an $\epsilon$ distance below $v_{c-1}$. 

According to this construction,
$s_j$ will be the intersection of chain $\mV$ and the supporting line of $u_j$ and $u_{j-1}$. Similarly, $r_j$ will be the intersection of $\mU$ and the supporting line of $v_j$ and $v_{j-1}$ (Fig.~\ref{fig:tower_polygon}). We say {\it ``will''} because we first fix position of $s_j$ (resp. $r_j$) from which position of vertex $u_j$ (resp. $v_j$ ) is determined. We will use this notation once again in Section~\ref{sec:reconstruction}.

\section{Properties of Pseudo-Triangle Visibility Graphs}\label{sec:Properties}

In this section, we describe a set of properties that a pair of $\mH$ and $\mG$ must have to be the Hamiltonian cycle and visibility graph of a pseudo-triangle. 
 
Any sub-sequence $<v_i,\ldots,v_j>$ on the Hamiltonian cycle is called a chain and is denoted by $[v_i,\ldots,v_j]$. 
A vertex $v_a$ on a chain $[v_i,\ldots,v_j]$ is a {\it blocking vertex} for the invisible pair $(v_i,v_j)$ if there is no visible pair of vertices $v_l$ in $[v_i,\ldots,v_{a-1}]$ and $v_k$ in $[v_{a+1},\ldots,v_j]$.
Ghosh showed that for every invisible pair of vertices $(u,v)$ in a visibility graph, there is at least one blocking vertex in $[u,\ldots,v]$ or $[v,\ldots,u]$. Furthermore, every vertex on the shortest Euclidean path between $u$ and $v$(inside the corresponding polygon) is a blocking vertex for this pair~\cite{Gh88}. Note that in a pseudo-triangle the shortest Euclidean path between two invisible vertices turns in only one direction (i.e. clockwise or counterclockwise).

Let $\mU$, $\mV$, and $\mW$ be the side-chains of a pseudo-triangle.  
The order of vertices in these chains is defined with respect to one of their corner vertices. For a vertex $u$ in chain $XY$,
$Ind^{X}(u)$ is equal to the number of vertices in chain $[X,\ldots,u]$ minus one. According to this definition,  $Ind^{A}(A)$ is zero and $Ind^{B}(A)$ is $k-1$ where $k$ is the number of vertices in chain $AB$. Then, based on a given vertex indexing we refer to the previous and next vertices of a given vertex on a side-chain. For a vertex $v$ in chain $XY$ with $Ind^{X}(v)=i$, we use $N^X(v,j)$ to refer to the vertex $u\in XY$ with $Ind^X(u)=i+j$. Similarly, $P^X(v,j)$ is the vertex $u\in XY$ with $Ind^X(u)=i-j$. For the sake of brevity, we use $N^X(v)$ instead of $N^X(v,1)$, and $P^X(v)$ instead of $P^X(v,1)$. Note that in this notation, $j$ can be a positive or negative natural number. For corner vertices which belong to two side-chains we use $N$ or $P$ notation only when the target chain is known from the context. For a vertex $u$, $FV^{X}(u, XY)$ is a vertex on chain $XY$ with minimum index that is visible from $u$ when the indices start from corner vertex $X$. Similarly, $LV^{X}(u, XY)$ is a vertex on chain $XY$ with maximum index that is visible from $u$ when the indices start from corner vertex $X$. We have used $FV$ and $LV$ respectively as abbreviation for {\it first visible} and {\it last visible}. Fig.~\ref{fig:vertices} depicts this notation.

\begin{figure}
\centering
\includegraphics[scale =1.0]{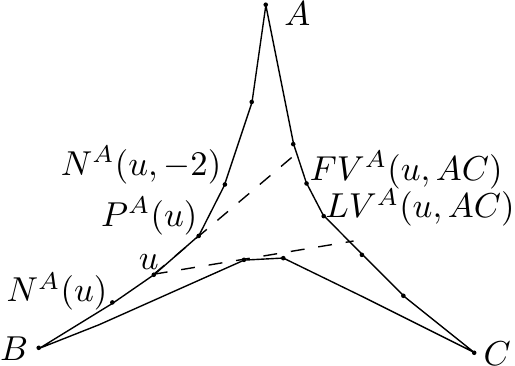} 
\caption{Notation used for vertices.}
\label{fig:vertices}
\end{figure}

\begin{lemma}\label{lem:1}
It is always possible to identify at least two corners of a pseudo-triangle $\mP$ from its corresponding Hamiltonian cycle and visibility graph.
\end{lemma}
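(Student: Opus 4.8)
The plan is to exploit the fact that every vertex of a pseudo-triangle lies on exactly one of the three concave side-chains (corners belonging to two of them), together with the blocking-vertex machinery of Ghosh recalled above. The key observation is that the corners $A$, $B$, $C$ are precisely the \emph{convex} vertices of $\mP$, and a convex corner "sees far": from a corner $X$ the two side-chains incident to $X$ are each seen in a contiguous initial segment, and in particular the corner of each incident chain that is \emph{not} $X$ is visible from $X$ whenever the chain is short, and more importantly the visibility region of a corner is large compared with that of a reflex vertex. So first I would try to isolate a combinatorial signature of corners that can be read off from $\mH$ and $\mG$ alone.

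Concretely, I would proceed as follows. First, fix the three arcs of $\mH$ determined by the (as yet unknown) corners; call the three candidate corner positions the breakpoints. For any vertex $v$, consider the set of vertices visible from $v$: in a pseudo-triangle this set is itself a contiguous arc of $\mH$ containing $v$ (this is the standard "visibility from a point is a connected sub-polygon" fact, which forces the visible vertices of $v$ to form an interval on the Hamiltonian cycle). Next, I would argue that a corner $X$ is characterized among all vertices by a maximality/extremality property of this visible interval relative to the blocking structure: walking along $\mH$ away from $X$ in either direction, the first invisibility is "blamed" on a blocking vertex that lies \emph{strictly between} $X$ and the far endpoint, and — crucially — for a corner these two blocking vertices lie on the two distinct side-chains incident to $X$, whereas for a reflex vertex the shortest path turns only one way (as noted in the excerpt), so one of the two blocking vertices is forced to sit on the \emph{same} chain, producing a detectable asymmetry. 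Using this, I would show that at least two of the breakpoints of $\mH$ admit a vertex with the two-sided, two-chain blocking signature, and that such a vertex must be a corner.

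The reason the statement says "at least two" rather than "all three" is the expected main obstacle: there can be a side-chain consisting of a single edge (the degenerate case that makes a pseudo-triangle look like a tower, cf.\ Fig.~\ref{fig:kinds}b), or more generally a configuration where the third corner's blocking witnesses are ambiguous because a neighbouring corner's visibility region swallows an entire side-chain. In that situation the combinatorial signature that pins down two corners need not distinguish the third from an ordinary reflex vertex using only local blocking data. So the hard part will be handling the degenerate short-chain case: I would dispose of it by observing that if one side-chain is a single edge then the pseudo-triangle is a tower, Theorem~\ref{thm:tower} applies, and two of its three corners (the two endpoints of the "single edge" side, or the two endpoints of the apex region — whichever pair survives) are identifiable from the strong-ordering structure of the bipartite graph obtained by deleting the reflex-chain edges; in the generic case, the two-chain blocking argument above already yields two, and often all three, corners. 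Wrapping the degenerate and generic cases together gives the claimed "at least two."
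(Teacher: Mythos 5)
Your proposal rests on a claim that is false for pseudo-triangles: the set of vertices visible from a vertex $v$ is \emph{not} a contiguous arc of $\mH$. Take a vertex $u$ in the middle of the concave chain $\mU$: on its own chain it sees only its two immediate neighbours (concavity blocks everything further), yet it also sees a nonempty interval of $\mW$, and along $\mH$ that interval is separated from $u$ by vertices of $\mU$ and of $\mW$ that $u$ does not see. The ``visibility polygon is star-shaped'' fact gives a connected region, but its trace on the boundary is a union of windows, not one interval. Since your entire corner signature (extremality of the visible interval, plus the ``two-sided, two-chain blocking'' asymmetry) is built on this, the argument collapses. The blocking signature itself is also wrong in the other direction: walking from a corner $A$ along an incident chain $\mU$, the first invisible vertex is blocked by an earlier vertex of $\mU$ itself (again by concavity), i.e.\ by a vertex on the \emph{same} chain, so corners do not exhibit the two-distinct-chain blocking pattern you want to use to distinguish them from reflex vertices.

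The paper's test is much simpler and you should compare against it: a vertex $x$ is a corner candidate if and only if its two neighbours on $\mH$ see each other. Every non-corner vertex is reflex and therefore blocks its own two neighbours, so at most the three corners can pass this test. A corner $X$ can fail the test only when the opposite side-chain protrudes between $X$'s two neighbours $u$ and $v$ and blocks them; in that case the shortest path from $u$ to $v$ enters the opposite chain at some $w$ and leaves it at some $w'$, and the subpolygons $\langle u,\ldots,B,\ldots,w\rangle$ and $\langle w',\ldots,C,\ldots,v\rangle$ are towers whose apexes are exactly the other two corners, which therefore do pass the test. This is also why your diagnosis of the ``at least two'' phenomenon is off: the obstruction is not a single-edge side-chain or an ambiguous degenerate tower, but the opposite chain blocking the visibility between the two $\mH$-neighbours of one specific corner, and the argument must show that this can happen for at most one corner at a time.
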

\begin{proof}
Since a corner is a convex vertex, it cannot be a blocking vertex for its neighbors. Therefore, in the Hamiltonian cycle of a pseudo-triangle, there are at most three vertices whose adjacent vertices are visible pairs. By traversing the Hamiltonian cycle, these visible pairs, and the corresponding corners, can be identified.

Suppose that this method does not identify all three corners. Without loss of generality, assume that $A$ is an unidentified corner and its adjacent vertices on chains $AB$ and $AC$ are respectively $u$ and $v$. This means that $u$ and $v$ do not see each other and there must be a blocking vertex for this invisible pair. Due to their concavity, this blocking vertex cannot belong to the chains $\mU$ and $\mV$.
Consider the shortest Euclidean path between $u$ and $v$ inside the  pseudo-triangle(Fig.~\ref{fig:pseudo_corners}). It is clear that this path is composed of a subchain of $\mW$, say $[w,\ldots,w']$,  and two edges $(u,w)$ and $(w',v)$ where $Ind^B(w')\ge Ind^B(w)$ and both edges $(u,w)$ and $(w',v)$ belong to the visibility graph.
The polygon formed by $<u,\ldots,B,\ldots,w>$ is a tower polygon with base $(u,w)$ and corner $B$. 
The corner of this tower is the isolated vertex obtained by removing the edges of its Hamiltonian cycle from its visibility graph. Therefore, the corner vertex $B$ is detectable. 
The same argument holds for the tower polygon formed by $<w',\ldots,C,\ldots,v>$ from which the corner $C$ can be identified.
This means that if $A$ cannot be identified from the visibility graph, the other two corners will be detectable. \qed
\end{proof}

\begin{figure}
\centering
\includegraphics[scale =1.0]{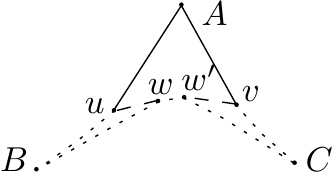} 
\caption{A pseudo-triangle with $B$ and $C$ as its detectable corners.}
\label{fig:pseudo_corners}
\end{figure}

Consider a pseudo-triangle $\mP$ with side-chains $\mU$, $\mV$, and $\mW$, and $\mG$ and $\mH$ as its visibility graph and Hamiltonian cycle, respectively.
Assume that the method described in Lemma~\ref{lem:1}, identifies only two corners of $\mP$. Without loss of generality, assume that $A$ is the unidentified vertex. This means that there is a subchain on $\mW$ which blocks the visibility of adjacent vertices of $A$ on chains $\mU$ and $\mV$. Then, there is no visibility edge between a vertex from $\mU$ and a vertex of $\mV$. By removing the edges of the Hamiltonian cycle from the visibility graph, two isolated vertices $B$ and $C$ and a connected bipartite graph, with parts $S$ and $T$, is obtained where $S$ consists of vertices of chains  $\mU$ and $\mV$ except the isolated vertices $B$ and $C$ and $T$ consists of vertices of $\mW$ except $B$ and $C$. By adding the isolated vertex $B$ to $T$, and the boundary edge $e$ to this bipartite graph that connects $B$ to its adjacent vertex on $\mU$, we will have a single isolated vertex $C$ and a bipartite graph with strong ordering. Then, according to Theorem~\ref{thm:tower} this bipartite graph corresponds to a tower polygon with base edge $e$ and $\mG$ and $\mH$ as its visibility graph and Hamiltonian cycle, respectively.  Fig.~\ref{fig:pseudo2tower} shows how such a pseudo-triangle can be interpreted as a tower polygon.

\begin{figure}
\centering
\includegraphics[scale =1.0]{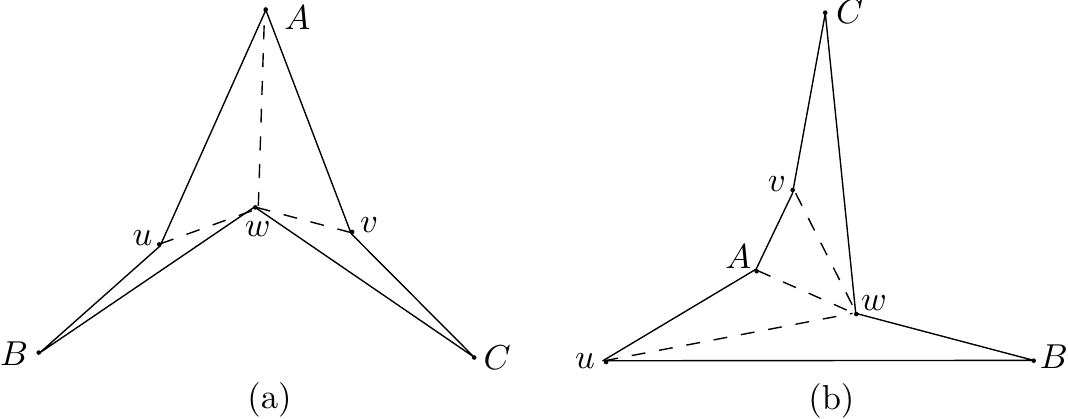} 
\caption{Interpreting a pseudo-triangle as a tower polygon: (a) initial  pseudo-triangle, (b) equivalent tower polygon.}
\label{fig:pseudo2tower}
\end{figure}

Therefore, we have the following property about the pair of $\mH$ and $\mG$ of a pseudo-triangle.

\begin{myproperty}\label{prop:1}
If $\mH$ and $\mG$ are respectively the Hamiltonian cycle and visibility graph of a pseudo-triangle $\mP$, at least two corners of $\mP$ can be identified. Furthermore, if only two corners are detectable, the given $\mH$ and $\mG$ belong to a pseudo-triangle if and only if there is a tower polygon  with $\mH$ and $\mG$ as its Hamiltonian cycle and visibility graph, respectively.
\end{myproperty}

From this property, we assume for the remainder of this section that the method described in the proof of Lemma~\ref{lem:1} identifies all three corners. Because otherwise, we can use the tower polygon algorithm to decide whether the given pair of $\mH$ and $\mG$ belong to a tower polygon( which is a special case pseudo-triangle)  and obtain the answer.

An {\it interval} of a chain with endpoints $p$ and $q$ is the set of points on this chain connecting $p$ to $q$.
Note that in this definition, the endpoints of an interval are not necessarily vertices of the chain. For example, for points $p$ on edge $(u_i,u_{i+1})$ and $q$ on edge $(u_j,u_{j+1})$ of a chain $\mU$ where $i<j$, the interval defined by $p$ and $q$ is the chain $[p,u_{i+1}, \ldots,u_j, q]$.

\begin{myproperty}\label{prop:2}
Every non-corner vertex of a side-chain sees a single nonempty interval from anyone of the other side-chains.
\end{myproperty}
\begin{proof}
The inner angle of such a vertex is more than $\pi$ and its inner visibility region cannot be bounded by a single concave chain. Therefore, it will see some parts from any of the other side-chains.
The continuity of these visible parts on each side-chain is proved by contradiction.
Assume that a vertex $u\in\mU$ sees two disjoint intervals $[v_i,\ldots,v_j]$ and $[v_k,\ldots,v_l]$ from $\mV$ meaning that the interval $(v_j,\ldots,v_k)$ is not visible from $u$. Consider an invisible point $v'$ in $(v_j,\ldots,v_k)$. There must be a blocking vertex for the invisible pair $(u,v')$. 
This blocking vertex must lie on the third side-chain which will also blocks either the visibility of $u$ and $v_j$ or $u$ and $v_k$. \qed
\end{proof}

\begin{myproperty}\label{prop:3}
(Fig.~\ref{fig:prop3Cor1}(a)) For any pair of side-chains $\mU$ and $\mV$ and a pair of vertices $\{u,v\}$ where $u\in\mU$, $v\in\mV$, $v\neq A$, and $u=FV^A(v,\mU)$, we have $(P^A(v),u)\in E$. In other words, the closest vertex to $A$ on $\mU$ which is visible to a vertex $v\in\mV$, for $v\neq A$, is also visible from $P^A(v)$.
\end{myproperty}
\begin{proof}
Consider the subpolygon $<u,P^A(u),\ldots,A,\ldots,P^A(v),v>$. If we triangulate this polygon, there is no internal diagonal connected to $v$ which means that$<u,v,P^A(v)>$ must be a triangle in any triangulation. Therefore, the edge $(u,P^A(v))$ is a diagonal and this edge must exist in the visibility graph. \qed
\end{proof}

\begin{figure}
\centering
\includegraphics[scale =1.0]{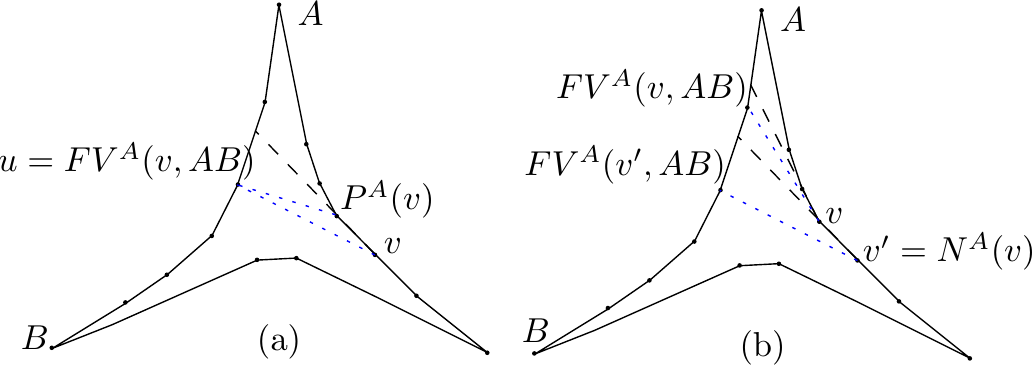} 
\caption{(a) Property~\ref{prop:3}: $P^A(v)$ and $FV^A(v,\mU)$ see each other,  (b) Corollary~\ref{cor:1}: $FV^A(v',\mU)$ cannot be closer to $A$ than $FV^A(v,\mU)$.}
\label{fig:prop3Cor1}
\end{figure}

\begin{corollary}\label{cor:1}
(Fig.~\ref{fig:prop3Cor1}(b)) For any pair of side-chains $\mU$ and $\mV$ and a vertex $v\in\mV$ where $v\neq C$, if $FV^A(v,\mU)=u_j$ and $FV^A(N^A(v),\mU)=u_k$, then \\ $Ind^A(u_k)\ge Ind^A(u_j)$.
\end{corollary}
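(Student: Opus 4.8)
The plan is to obtain Corollary~\ref{cor:1} as an immediate consequence of Property~\ref{prop:3} together with the definition of $FV^A$, with no further geometric work needed. Write $v' := N^A(v)$. This vertex is well defined precisely because $v \neq C$: since $\mV=[A,\ldots,C]$, every vertex of $\mV$ other than $C$ has a successor moving away from $A$. Moreover $Ind^A(v')=Ind^A(v)+1\ge 1$, so $v'\neq A$, and hence the pair $\{u_k,v'\}$ with $u_k=FV^A(v',\mU)$ satisfies all the hypotheses of Property~\ref{prop:3} (here $u_k\in\mU$, $v'\in\mV$, $v'\neq A$, and $u_k=FV^A(v',\mU)$). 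If one prefers, the case $v=A$ can be disposed of separately: then $u_j=A$, $Ind^A(u_j)=0$, and the asserted inequality is trivial, so we may assume $v\neq A$ throughout.

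Applying Property~\ref{prop:3} to $\{u_k,v'\}$ yields $(P^A(v'),u_k)\in E$. By the index bookkeeping for the $N^A/P^A$ operators, $P^A(v')$ is the vertex of $\mV$ with index $Ind^A(v')-1=Ind^A(v)$, which is exactly $v$; therefore $(v,u_k)\in E$, i.e.\ $u_k$ is visible from $v$. On the other hand, $u_j=FV^A(v,\mU)$ is by definition the vertex of $\mU$ of \emph{minimum} index (counting from $A$) that is visible from $v$. Since $u_k$ is one such visible vertex, we get $Ind^A(u_k)\ge Ind^A(u_j)$, which is the claim. Equivalently, phrased contrapositively: if $FV^A(N^A(v),\mU)$ were strictly closer to $A$ than $FV^A(v,\mU)$, then $v$ would see a vertex of $\mU$ strictly closer to $A$ than its own first visible vertex, a contradiction.

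I do not expect a genuine obstacle here; the corollary is really just the monotonicity already latent in Property~\ref{prop:3}. The only points requiring care are the two side conditions feeding into Property~\ref{prop:3} — that $N^A(v)$ exists (ensured by $v\neq C$) and that $N^A(v)\neq A$ (ensured because its index from $A$ is at least $1$) — and the small verification that $P^A(N^A(v))=v$, which is immediate from the definitions of $N^A$ and $P^A$ in terms of $Ind^A$.
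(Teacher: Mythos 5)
Your proof is correct and is exactly the derivation the paper intends: the corollary is stated without proof as an immediate consequence of Property~\ref{prop:3}, obtained by applying that property to $N^A(v)$ so that $v=P^A(N^A(v))$ sees $u_k=FV^A(N^A(v),\mU)$, whence minimality of $u_j=FV^A(v,\mU)$ gives $Ind^A(u_k)\ge Ind^A(u_j)$. Your extra care about $N^A(v)$ existing, $N^A(v)\neq A$, and the degenerate case $v=A$ is sound and only makes explicit what the paper leaves implicit.
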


\begin{corollary}\label{cor:2}
For any pair of side-chains $\mU$ and $\mV$ and a vertex $v\in\mV$ where $v\neq C$, if $v$ does not see any vertex from $\mU$, then $N^A(v)$ does not see any vertex of $\mU$ as well.
\end{corollary}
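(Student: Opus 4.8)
The plan is to prove the contrapositive: if $N^A(v)$ sees some vertex of $\mU$, then $v$ already sees some vertex of $\mU$. So I would start by assuming that $N^A(v)$ has at least one neighbour in $\mU$ in the visibility graph, and set $u=FV^A(N^A(v),\mU)$, the vertex of $\mU$ closest to $A$ that is visible from $N^A(v)$. This quantity is well defined precisely because of the contrapositive assumption that $N^A(v)$ sees something on $\mU$.

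The next step is to check that Property~\ref{prop:3} applies to the pair $\{u,N^A(v)\}$, with $N^A(v)$ playing the role of ``$v$'' in that property. I need the second vertex of the pair to lie on $\mV$ and to differ from the corner: indeed $N^A(v)\in\mV$ since $v\in\mV$ and $v\neq C$, and $Ind^A(N^A(v))=Ind^A(v)+1\ge 1>0=Ind^A(A)$, so $N^A(v)\neq A$; and by construction $u=FV^A(N^A(v),\mU)$. Hence Property~\ref{prop:3} yields $(P^A(N^A(v)),u)\in E$. Since $P^A(N^A(v))=v$ (applying $P^A$ undoes the shift $N^A$), this gives $(v,u)\in E$ with $u\in\mU$, contradicting the hypothesis that $v$ sees no vertex of $\mU$. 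This establishes the contrapositive and hence the corollary.

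There is essentially no obstacle here, as the statement is a one-line consequence of Property~\ref{prop:3}; the only point requiring care is verifying that the hypotheses of Property~\ref{prop:3} genuinely hold in the contrapositive setting, namely that $N^A(v)$ is a true vertex of $\mV$ distinct from $A$ and that $FV^A(N^A(v),\mU)$ exists, both of which follow immediately from $v\neq C$ together with the contrapositive assumption.
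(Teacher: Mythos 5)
Your proof is correct and matches the paper's intent: the paper states Corollary~\ref{cor:2} without proof as an immediate consequence of Property~\ref{prop:3}, and your contrapositive argument (applying Property~\ref{prop:3} to $N^A(v)$ and $FV^A(N^A(v),\mU)$ to conclude $(v,u)\in E$) is exactly that derivation. The checks that $N^A(v)\neq A$ and that $P^A(N^A(v))=v$ are the right details to verify, and they hold as you state.
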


\begin{corollary}\label{cor:2:2}
(Fig.~\ref{fig:prop4prop5cor3}(a)) For any pair of side-chains $\mU$ and $\mV$ and a pair of vertices $(u,v)$ where $u\in\mU$ and $v\in\mV$ and $k,l>0$, if both $(P^A(u,k),v)$ and $(u,P^A(v,l))$ exist in $E$, then $(P^A(u,k),P^A(v,l))\in E$.
\end{corollary}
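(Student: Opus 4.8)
The plan is to derive the statement from Property~\ref{prop:2}, Corollary~\ref{cor:1}, and Corollary~\ref{cor:2}, applied with the roles of $\mU$ and $\mV$ exchanged and tracked through the ``first visible'' map along $\mU$. Write $u'=P^A(u,k)$ and $v'=P^A(v,l)$; by hypothesis $(u',v),(u,v')\in E$, and the goal is $(u',v')\in E$. All indices below are $Ind^A$. Since $u'$ is strictly closer to $A$ than $u$ on $\mU=AB$, we have $0\le Ind^A(u')<Ind^A(u)\le Ind^A(B)$, so $u'\ne B$ and no vertex of the subchain $[u',\ldots,u]$ other than possibly $u$ itself is a corner.

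First I would check that $FV^A(w,\mV)$ is defined for every $w$ on $[u',\ldots,u]$. Indeed $u$ sees $v'\in\mV$, so applying Corollary~\ref{cor:2} (with $\mU$ and $\mV$ exchanged) step by step from $u$ towards $A$ shows that every vertex of $\mU$ lying between $A$ and $u$ --- hence every vertex of $[u',\ldots,u]$ --- sees at least one vertex of $\mV$. Now apply Corollary~\ref{cor:1} (with $\mU$ and $\mV$ exchanged) to each consecutive pair of vertices along $[u',\ldots,u]$ and chain the inequalities:
\[
Ind^A\bigl(FV^A(u',\mV)\bigr)\ \le\ Ind^A\bigl(FV^A(u,\mV)\bigr).
\]
Since $u$ sees $v'$, the definition of $FV^A$ gives $Ind^A(FV^A(u,\mV))\le Ind^A(v')$, hence $Ind^A(FV^A(u',\mV))\le Ind^A(v')$. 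On the other hand $u'$ sees $v$, so $Ind^A(LV^A(u',\mV))\ge Ind^A(v)>Ind^A(v')$, the last step because $l\ge1$.

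To conclude I would use Property~\ref{prop:2}: when $u'\ne A$, the vertices of $\mV$ visible from the non-corner vertex $u'$ form a contiguous subsequence running from $FV^A(u',\mV)$ to $LV^A(u',\mV)$; the two bounds above place $Ind^A(v')$ inside this index range, so $v'$ is visible from $u'$ and $(u',v')\in E$. The case $u'=A$ must be handled separately (Property~\ref{prop:2} does not apply at a corner): there $(A,v)\in E$, and since $v'=P^A(v,l)$ lies on the subchain $[A,\ldots,v]$ of the concave chain $\mV$, visibility of $v$ from the endpoint $A$ forces visibility of all of $[A,\ldots,v]$ from $A$, so $(A,v')\in E$.

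I expect the substantive point --- rather than any single hard computation --- to be the well-definedness of $FV^A$ and $LV^A$ all along $[u',\ldots,u]$: this is exactly what Corollary~\ref{cor:2} supplies, and it is why that corollary must be applied propagating \emph{towards} the corner $A$. The remaining degenerate cases ($u=B$, $v=C$, $v'=A$) are harmless: Property~\ref{prop:2} is invoked only at $u'$, and if $v'=A$ the inequality $Ind^A(FV^A(u',\mV))\le Ind^A(v')=0$ already forces $FV^A(u',\mV)=A=v'$, i.e.\ $(u',v')\in E$, with no appeal to Property~\ref{prop:2}.
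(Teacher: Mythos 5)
Your argument is correct and is essentially the paper's own proof with the roles of $\mU$ and $\mV$ exchanged: the paper iterates Corollary~\ref{cor:1} along $[v',\ldots,v]$ to bound $FV^A(v',\mU)$ and then sandwiches $u'$ between $FV^A(v',\mU)$ and $u$ inside the visibility interval of $v'$ via Property~\ref{prop:2}, exactly mirroring your sandwiching of $v'$ between $FV^A(u',\mV)$ and $v$. The additional care you take with well-definedness via Corollary~\ref{cor:2} and with the degenerate corner cases is detail the paper omits, but it does not change the approach.
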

\begin{proof}
Lets denote $P^A(u,k)$ and $P^A(v,l)$ by $u'$ and $v'$, respectively. Trivially, $Ind^A(u')\ge Ind^A(FV^A(v,AB))$. Applying Corollary~\ref{cor:1} iteratively on chain $[v',\ldots,v]$ implies that $Ind^A(FV^A(v,AB))\ge Ind^A(FV^A(v',AB))$. This means that $u'$ lies between two vertices $FV^A(v',AB)$ and $u$ which are both visible from $v'$. Then, Property~\ref{prop:2} implies that $u'$ is also visible from $v'$.\qed
\end{proof}

\begin{corollary}\label{cor:2:3}
(Fig.~\ref{fig:prop4prop5cor3}(b)) For any pair of side-chains $\mU$ and $\mV$ and a pair of vertices $(u,v)$ where $u\in\mU$ and $v\in\mV$, if both $(N^A(u,k),N^A(v,l))$ and $(u,v)$ exist in $E$ where $l,k>0$, then at least one of the edges $(N^A(u),v)$ or $(u,N^A(v))$ exists in $E$.
\end{corollary}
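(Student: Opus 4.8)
The plan is to imitate the triangulation argument used for Property~\ref{prop:3}. Write $u''=N^A(u,k)$ and $v''=N^A(v,l)$ with $k,l\ge 1$, and for the remainder assume $u$ and $v$ are non-corner vertices; the cases $u=A$ or $v=A$ are immediate, since then $N^A(u)$ (resp.\ $N^A(v)$) is the neighbour of $A$ on the other chain and one of the two claimed edges is exactly the visible pair that witnesses detectability of $A$, while $u=B$ or $v=C$ would leave $N^A$ undefined. First I would single out the sub-polygon $Q$ of $\mP$ bounded by the chord $uv$, the sub-chain of $\mU$ from $u$ to $u''$, the chord $u''v''$, and the sub-chain of $\mV$ running from $v''$ back to $v$.

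The first point to nail down --- and the one I expect to be the main obstacle --- is that $Q$ is a genuine simple polygon whose interior lies inside $\mP$. The crucial observation is that along $\mH$ the arc from $v$ to $u$ that avoids $A$ passes, in order, through the remaining vertices of $\mV$ (towards $C$), then $\mW$, then the remaining vertices of $\mU$ (from $B$ down to $u$); since $u''$ lies strictly between $u$ and $B$ on $\mU$ and $v''$ lies strictly between $v$ and $C$ on $\mV$, both $u''$ and $v''$ sit on this arc, i.e.\ on the side of the chord $uv$ that contains the corners $B$ and $C$. Consequently the chord $u''v''$ cannot cross $uv$, and, since the two reflex sub-chains lie on $\partial\mP$ (hence are disjoint, and each chord meets $\partial\mP$ only at its endpoints, a chord being unable to pass through a reflex vertex and stay inside $\mP$), the four pieces fit together into a simple sub-polygon $Q\subseteq\mP$; the few collinear/degenerate configurations, such as a chord through a corner, can be disposed of separately.

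With $Q$ available, I would triangulate it and take $x$ to be the apex of the unique triangle of the triangulation incident to the boundary edge $uv$ of $Q$. The other two sides of that triangle are edges or diagonals of $Q$, hence segments lying inside $\mP$, so $(u,x)\in E$ and $(v,x)\in E$. Every vertex of $Q$ other than $u$ and $v$ lies on one of the two sub-chains, so either $x=N^A(u,j)$ with $1\le j\le k$, or $x=N^A(v,j)$ with $1\le j\le l$. In the first case $u$ and $x$ are both visible from the non-corner vertex $v$ and $Ind^A(u)<Ind^A(N^A(u))\le Ind^A(x)$, so Property~\ref{prop:2} (the vertices of $\mU$ visible from $v$ form a single interval) forces $N^A(u)$ to be visible from $v$ as well, i.e.\ $(N^A(u),v)\in E$; the second case yields $(u,N^A(v))\in E$ by the symmetric argument. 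Either way at least one of the two edges is present, which is what we want.
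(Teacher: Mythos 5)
Your proof is correct, but it takes a genuinely different route from the paper's. The paper argues by induction on $k+l$: the base case identifies $N^A(v)$ as the first visible vertex of $\mV$ from $N^A(u)$ and applies Property~\ref{prop:3}, and the inductive step either finishes with Property~\ref{prop:2} (when $v$ already sees $N^A(u,k)$) or uses Property~\ref{prop:3} to pull the far endpoint one step back toward $A$ and recurses on a smaller instance. You instead argue in one shot: triangulate the sub-polygon $Q$ bounded by the two chords $uv$ and $u''v''$ and the two boundary subchains, take the apex $x$ of the triangle supported by $uv$, and apply Property~\ref{prop:2} once to whichever chain contains $x$. This is the same device the paper uses to prove Property~\ref{prop:3} itself, here applied directly to the region between the two chords; it eliminates the induction and the paper's somewhat garbled $FV^A(u',\cdot)$ bookkeeping, at the cost of having to verify that $Q$ is a simple sub-polygon of $\mP$ --- which you do correctly ($u''$ and $v''$ lie on the non-$A$ side of the chord $uv$, and a straight segment inside $\mP$ with both endpoints on that side cannot cross $uv$ twice, hence not at all). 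The one soft spot is your dismissal of $u=A$ or $v=A$: the detectability witness for $A$ is the pair of neighbours of $A$ on the two chains, which coincides with one of the claimed edges only when $v$ (resp.\ $u$) is itself that neighbour. The case does still collapse as you say, because two non-adjacent vertices of a concave side-chain are never mutually visible, so $(A,v)\in E$ with $v\in\mV$ already forces $v=N^A(A)$ --- but that observation is the missing line. (The paper's proof silently ignores these corner configurations as well.)
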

\begin{proof}
We prove this by induction on $k+l$. For the induction base step, assume that $k=l=1$. If $v$ is not visible from $N^A(u)$, $N^A(v)$ must be equal to $FV^A(N^A(u),AB)$. Then, Property~\ref{prop:3} implies that $u$ sees $N^A(v)$. 

For the inductive step, assume that the corollary holds for all $k+l<n$ where $n> 2$. Lets denote $N^A(u,k)$ and $N^A(v,l)$ by $u'$ and $v'$, respectively.  If $Ind^A(FV^A(u',AB))\ge Ind^A(v)$, $v$ sees both vertices $u$ and $u'$ which according to Property~\ref{prop:2} sees $N^A(u)$ as well. Otherwise, according to Property~\ref{prop:3}, $FV^A(u',AB)$ is visible from $P^A(u')$. If $P^A(u')=u$, then $u$ sees $v$ and $FV^A(u',AB)$ which is farther from $A$ than $v$ and means that $u$ and $N^A(v)$ see each other. Finally, when $P^A(u')\neq u$ we obtain a smaller version of the problem with parameters $k-1$ and $l$ which holds by induction.\qed
\end{proof}

\begin{figure}
\centering
\includegraphics[scale =.80]{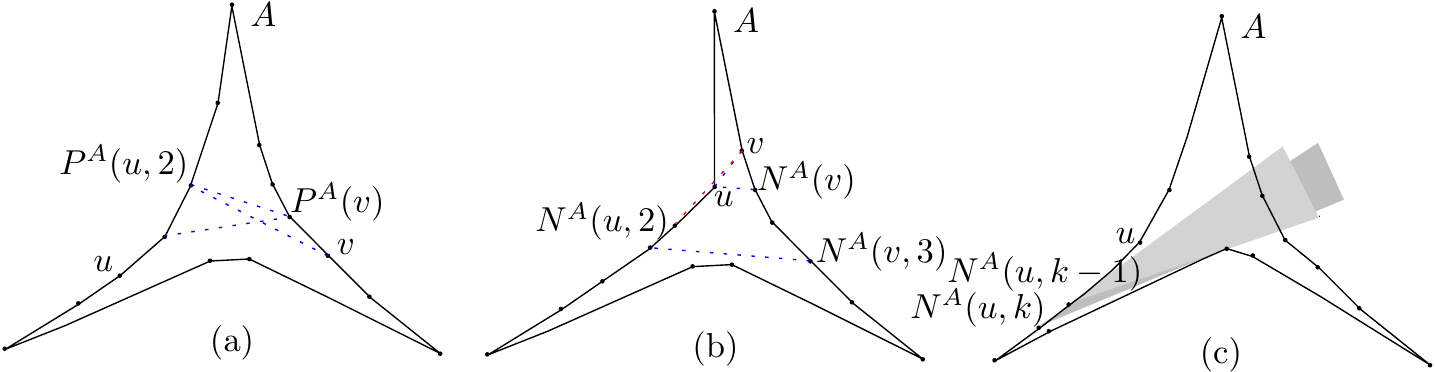} 
\caption{(a) Corollary~\ref{cor:2:2}: $P^A(u,2)$ and $P^A(v)$ must see each other, (b) Corollary~\ref{cor:2:3}: $u$ and $N^A(v)$ see each other, (c) Corollary~\ref{cor:3}: Visible vertices of $\mV$ from $N^A(u,k)$ are also visible to $N^A(u,k-1)$.}
\label{fig:prop4prop5cor3}
\end{figure}

\begin{corollary}\label{cor:3}
(Fig.~\ref{fig:prop4prop5cor3}(c)) For any pair of side-chains $\mU$ and $\mV$ and a pair of vertices $u\in\mU$ and $v\in\mV$, where $(u,v)\in E$ and none of the edges $(N^A(u),v)$ and $(u,N^A(v))$ exist in $E$, all visible vertices of $\mV$ from $N^A(u,k)$ are also visible from $N^A(u,k-1)$ (for any $k>0$). This implies that $LV^A(N^A(u,k),\mV)$ must lie above $v$.
\end{corollary}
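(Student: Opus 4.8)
First I would record what the hypothesis says about $u$ itself: by Property~\ref{prop:2}, $u$ sees a single nonempty interval of $\mV$, and since $(u,v)\in E$ while $(u,N^A(v))\notin E$ this interval is $[FV^A(u,\mV),\ldots,v]$, so $LV^A(u,\mV)=v$. Writing $u_j=N^A(u,j)$, the goal is that for every $k>0$ the visible interval of $u_k$ on $\mV$ is contained in that of $u_{k-1}$; this is vacuous when $u_k$ sees nothing on $\mV$, and by Corollary~\ref{cor:2} (with the roles of $\mU$ and $\mV$ exchanged) $u_k$ sees something only if every $u_j$ with $j\le k$ does, so I may assume the relevant first/last-visible vertices exist. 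Containment of the intervals then reduces to two one-step monotonicities: $Ind^A(FV^A(u_k,\mV))\ge Ind^A(FV^A(u_{k-1},\mV))$, which is exactly Corollary~\ref{cor:1} used along the chain $\mU$ (i.e.\ with $\mU,\mV$ exchanged), and $Ind^A(LV^A(u_k,\mV))\le Ind^A(LV^A(u_{k-1},\mV))$, which needs an argument. As a weaker preliminary I would first show $Ind^A(LV^A(u_k,\mV))\le Ind^A(v)$ for all $k\ge1$: if $u_k$ saw some $N^A(v,l)$ with $l>0$, then Corollary~\ref{cor:2:3} applied to $(u,v)$ with the edge $(N^A(u,k),N^A(v,l))\in E$ would put one of $(N^A(u),v),(u,N^A(v))$ into $E$, contrary to hypothesis; for $k=1$ this already gives the $LV$-monotonicity since $LV^A(u_0,\mV)=v$.

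For $k\ge2$ I would argue by contradiction. Suppose $Ind^A(LV^A(u_k,\mV))>Ind^A(LV^A(u_{k-1},\mV))$, and put $z=LV^A(u_{k-1},\mV)$ and $x=N^A(z)$. Applying Corollary~\ref{cor:2:3} to the pair $(u_{k-1},z)$ with the edge $(N^A(u_{k-1},1),N^A(z,l))=(u_k,LV^A(u_k,\mV))\in E$ (with $l>0$, since $LV^A(u_k,\mV)$ is strictly past $z$), together with $(u_{k-1},N^A(z))\notin E$ (by definition of $LV$), forces $(u_k,z)\in E$; hence $FV^A(u_k,\mV)$ is no farther from $A$ than $z$, so $x$ lies in the visible interval of $u_k$ and $(u_k,x)\in E$. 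On the other hand $x$ is past $LV^A(u_{k-1},\mV)$, so $(u_{k-1},x)\notin E$; and since $Ind^A(x)\le Ind^A(LV^A(u_k,\mV))\le Ind^A(v)$ while $Ind^A(x)>Ind^A(z)\ge Ind^A(FV^A(u_{k-1},\mV))\ge Ind^A(FV^A(u,\mV))$ (the last inequality by Corollary~\ref{cor:1} along $\mU$), the vertex $x$ lies in the interval $[FV^A(u,\mV),\ldots,v]$, so $(u,x)\in E$. But $x$ has $A$-index between $1$ and $Ind^A(v)<Ind^A(C)$, hence is a non-corner vertex of $\mV$, so by Property~\ref{prop:2} the vertices of $\mU$ visible from $x$ form an interval; this contradicts $x$ seeing $u=u_0$ and $u_k$ but not the intermediate vertex $u_{k-1}$.

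This yields $Ind^A(LV^A(u_k,\mV))\le Ind^A(LV^A(u_{k-1},\mV))$ for all $k>0$, hence the interval containment and the statement; chaining the containments down to $u_0$ gives $Ind^A(LV^A(u_k,\mV))\le Ind^A(v)$, i.e.\ $LV^A(N^A(u,k),\mV)$ lies above $v$. I expect the $k\ge2$ step to be the real obstacle: Corollary~\ref{cor:2:3} by itself only bounds what $u_k$ sees relative to the original $v$, not relative to $u_{k-1}$, so the essential extra ingredient is the ``vertical'' direction of Property~\ref{prop:2} (each vertex of $\mV$ sees an interval of $\mU$), which forbids the pattern ``$u_0$ and $u_k$ see $x$ but $u_{k-1}$ does not.''
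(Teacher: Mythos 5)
Your proof is correct, and it rests on exactly the ingredients the paper's own proof uses: Corollary~\ref{cor:2:3} to cap the last visible vertex of $N^A(u,k)$ on $\mV$ at $v$, Corollary~\ref{cor:1} for the monotonicity of the first visible vertex, and Property~\ref{prop:2} applied to a vertex of $\mV$ (it sees an \emph{interval} of $\mU$) to forbid the pattern ``$u$ and $N^A(u,k)$ see $x$ but $N^A(u,k-1)$ does not.'' Where you diverge is in the scaffolding: you reduce the claim to one-step monotonicity of $LV^A(\cdot,\mV)$ along $\mU$ and prove that by contradiction with the witness $x=N^A(LV^A(N^A(u,k-1),\mV))$, flagging the $k\ge2$ case as the real obstacle. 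The paper shows that this intermediate goal is unnecessary: once Corollary~\ref{cor:2:3} and Corollary~\ref{cor:1} give $Ind^A(FV^A(u,\mV))\le Ind^A(v')\le Ind^A(P^A(v))$ for every $v'\in\mV$ visible from $N^A(u,k)$, such a $v'$ already lies inside the visible interval of the \emph{base} vertex $u$, so $v'$ sees both $u$ and $N^A(u,k)$ and hence, by Property~\ref{prop:2} applied to $v'$, every vertex of $[u,\ldots,N^A(u,k)]$, in particular $N^A(u,k-1)$. In short, comparing $N^A(u,k)$ to the fixed vertex $u$ suffices; comparing it to $N^A(u,k-1)$ is a strictly harder statement that your argument handles correctly but the direct argument sidesteps. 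One small point: the paper concludes $LV^A(N^A(u,k),\mV)$ is at most $P^A(v)$, while your preliminary only gives ``at most $v$''; strictness does follow in your setup, since if $v$ saw both $u$ and $N^A(u,k)$ then Property~\ref{prop:2} applied to $v$ would force $(N^A(u),v)\in E$, contrary to hypothesis.
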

\begin{proof}
Any visible vertex $v'$ must belong to $[A,\ldots,P^A(v)]$. Otherwise, according to Corollary~\ref{cor:2:3} either $(N^A(u),v)$ or $(u,N^A(v))$ must exist. According to Corollary~\ref{cor:1}, $FV^A(u,\mV)$ is closer to $A$ than $FV^A(N^A(u,k),\mV)$, and because of the continuity of the chain that is visible from $u$ (Property~\ref{prop:2}) , $v'$ will be visible from $u$. This implies that $v'$ is visible from all vertices of the chain $[u,\ldots,N^A(u,k)]$. \qed
\end{proof}

For each pair of vertices $u\in\mU$ and $v\in\mV$, the diagonal edge $(u,v)$ in the visibility graph of a pseudo-triangle specifies a tower formed by the boundary vertices $<u,\ldots,A,\ldots,v>$. The vertices of this tower satisfy the strong ordering defined earlier. This strong ordering can be derived from Property~\ref{prop:2} and corollaries~\ref{cor:2:2} and \ref{cor:2:3}. Therefore, we do not specify this as a new property.

\begin{myproperty}\label{prop:689}
For any pair of side-chains $\mU$ and $\mV$ and a pair of vertices $u\in\mU$ and $v\in\mV$, where $(u,v)\in E$ and none of the edges $(N^A(u),v)$ and $(u,N^A(v))$ exist in $E$, 
\begin{itemize}
\item[a.] (Fig.~\ref{fig:prop689}(a)) there is a nonempty subchain of the third side-chain $\mW$ which is visible from both $u$ and $v$.
\item[b.] (Fig.~\ref{fig:prop689}(b)) let $[w,\ldots,w']$ be the maximum subchain of $\mW$ visible to both $u$ and $v$ where $w'=N^B(w,l)$, $l\geq0$. Then, $w'$ is not closer to $B$ than $LV^B(N^A(u),\mW)$, or formally, $Ind^B(w')\ge Ind^B(LV^B(N^A(u),\mW))$.
\item[c.] (Fig.~\ref{fig:prop689}(b)) $FV^B(N^A(v), \mW)$ is not closer to $B$ than $LV^B(N^A(u),\mW)$, or formally, $Ind^B(FV^B(N^A(v), \mW))\ge Ind^B(LV^B(N^A(u),\mW))$.
\end{itemize}
\end{myproperty}
\begin{proof}
(a) Triangulating $\mP$ using the edge $(u,v)$, the adjacent triangle of this edge in the opposite side of $A$ must have its third vertex on $\mW$. This is due to the invisibility of $(N^A(u),v)$ and $(u,N^A(v))$ pairs. Therefore, this chain contains at least one vertex. 
From Property~\ref{prop:2} we know that the visible part of $\mW$ from any one of vertices $u$ and $v$ is continuous and the intersection of these parts will be continuous as well.

(b) From (a) we know that the subchain $[w,\ldots,w']$ is nonempty.
For the sake of a contradiction, assume that $w''=LV^B(N^A(u),\mW)$ is farther from $B$ than $w'$. 
Then, the segments $(w'',N^A(u))$ and $(w',v)$ intersect each other inside the pseudo-triangle. Let $p$ be this intersection point. The subpolygon formed by the boundary vertices $<u,N^A(u),p,v>$ must be a convex polygon which completely lies inside the pseudo-triangle. Otherwise, $w'$ will prevent $N^A(u)$ and $w''$ from seeing each other. So, the diagonal edge $(N^A(u),v)$ must exist in $E$ which is a contradiction.

(c) Let $v'$ be $N^A(v)$ and $u'$ be $N^A(u)$. 
For the sake of contradiction, assume that $FV^B(v', \mW)$ is closer to $B$ than $LV^B(u',\mW)$. Then, the edges $(v',FV^B(v', \mW))$ and $(u',LV^B(u',\mW))$ intersect within the pseudo-triangle. Let $p$ be this intersection point. The subpolygon formed by the boundary vertices $<u,v,v',p,u'>$ must be a convex polygon which completely lies inside the pseudo-triangle. Otherwise, $FV^B(v', \mW)$ will prevent $u'$ and $LV^B(u',\mW)$ from seeing each other. So, all diagonal edges $(u,v')$, $(u',v)$, and $(u',v')$ must exist in $E$ which is a contradiction. \qed
\end{proof}

\begin{figure}
\centering
\includegraphics[scale =.80]{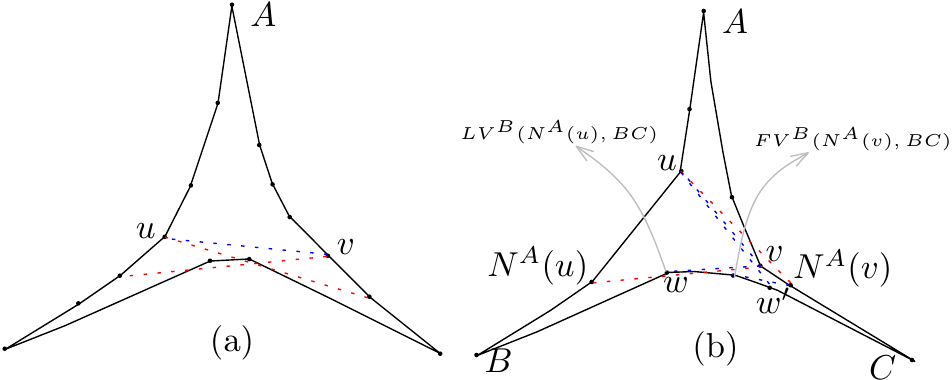} 
\caption{(a) Property~\ref{prop:689}(a): $u$ and $v$ must see common vertices on $mW$, (b) Property~\ref{prop:689}(b): $w'$ is not closer to $B$ than $LV^B(N^A(u),\mW)$, Property~\ref{prop:689}(c): $FV^B(N^A(v),\mW)$ is not closer to $B$ than $LV^B(N^A(u),\mW)$.}
\label{fig:prop689}
\end{figure}

\begin{corollary}\label{cor:4}
For any side-chain $\mW$, there exists at least one vertex $w\in\mW$ that sees some vertices from both of the other side-chains. Furthermore, every vertex $P^B(w,k)$ where $k>0$, sees at least one vertex from $\mU$.
\end{corollary}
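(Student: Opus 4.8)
The plan is to obtain the witness $w$ of the first assertion from a suitably extremal visibility edge between the two side-chains that share a corner with $\mW$, and to derive the second (monotonicity) assertion from the tower structure of Theorem~\ref{thm:tower}. Throughout, $\mW=BC$ and $B$ is the corner shared by $\mW$ and $\mU=AB$, while $\mV=AC$ shares the corner $A$ with $\mU$.

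For the first assertion I would look at the set $\{(u,v)\in E:\ u\in\mU,\ v\in\mV,\ u\neq v\}$. It is non-empty: under the running assumption of this section all three corners are detectable, so the two boundary neighbours $u_1\in\mU$ and $v_1\in\mV$ of the corner $A$ satisfy $(u_1,v_1)\in E$. Pick $(u^*,v^*)$ in this set maximizing $Ind^A(u^*)+Ind^A(v^*)$, the indices being taken along $\mU$ and along $\mV$ from $A$. Maximality forces $u^*=B$ or $(N^A(u^*),v^*)\notin E$, and likewise $v^*=C$ or $(u^*,N^A(v^*))\notin E$. If $u^*\neq B$ and $v^*\neq C$, the hypotheses of Property~\ref{prop:689} hold for $(u^*,v^*)$, so part~(a) produces a non-empty subchain of $\mW$ visible from both $u^*$ and $v^*$; any of its vertices serves as the required $w$. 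If $u^*=B$ (resp.\ $v^*=C$), take $w=B$ (resp.\ $w=C$): then $B\in\mW$ already sees $v^*\in\mV$ together with its own neighbour on $\mU$ (resp.\ $C\in\mW$ sees $u^*\in\mU$ together with its neighbour on $\mV$).

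For the second assertion I would prove the monotonicity fact: if some $x\in\mW$ sees a vertex $u\in\mU$ with $u\neq B$, then every $P^B(x,k)$ with $k>0$ also sees a vertex of $\mU$. The segment $ux$ is a chord of $\mP$ and cuts off the subpolygon whose boundary is this chord together with the subchain of $\mU$ from $u$ to $B$ and the subchain of $\mW$ from $B$ to $x$; since these two subchains are concave, share the single vertex $B$, and are joined by the edge $ux$, this subpolygon is a tower with corner $B$ and base $ux$. By Theorem~\ref{thm:tower}, removing the tower's chain edges from its visibility graph leaves $B$ isolated together with a connected bipartite graph on its two chains, so every vertex of the $\mW$-chain other than $B$ has positive degree there, i.e.\ it sees some vertex of the $\mU$-chain of the tower, which is a vertex of $\mU$. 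Every $P^B(x,k)$ with $k>0$ lies on this $\mW$-chain (it is between $B$ and $x$); if it equals $B$ it sees its $\mU$-neighbour along the boundary, and otherwise the previous sentence applies. Taking $x=w$ and $u=u^*$ then finishes the corollary; when $w$ was chosen to be $B$, the second assertion is vacuous.

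The main difficulty I anticipate is not a new idea but the bookkeeping at the corner and degenerate cases: I must check that choosing $w$ to be a corner in the first part genuinely yields a vertex seeing vertices of both other side-chains and does not invalidate the second part, and that the subpolygon cut off by $ux$ really qualifies as a tower so that Theorem~\ref{thm:tower} applies — including the boundary cases where a concave chain of that tower has no interior vertices (then the claim is either vacuous or reduces to $B$ seeing its $\mU$-neighbour).
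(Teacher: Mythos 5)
Your proof is correct, and it differs from the paper's in both halves. For the existence part, the paper splits on whether some visible pair $(u,v)$, $u\in\mU$, $v\in\mV$, satisfies the hypotheses of Property~\ref{prop:689}; your extremal choice of $(u^*,v^*)$ maximizing $Ind^A(u^*)+Ind^A(v^*)$ (seeded by the visible pair of neighbours of $A$, which exists because all three corners are detectable) makes that case analysis automatic and is arguably cleaner: either Property~\ref{prop:689}(a) applies, or the pair is anchored at $B$ or $C$. In the anchored case the paper takes $w=N^B(B)$ and argues, via an informal ``cloning'' of $B$, that it sees $LV^A(B,\mV)$, whereas you take $w=B$ itself; this satisfies the statement literally but makes the second clause vacuous for your witness, while the paper's witness keeps it non-vacuous --- a point worth noting since Section~\ref{sec:reconstruction} relies on the non-vacuous form (applied there to specific vertices of $\mW$, so your version of the monotonicity still suffices). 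For the second part the paper simply iterates Property~\ref{prop:3} on the chain pair $(\mW,\mU)$ with shared corner $B$: if $w$ sees $FV^B(w,\mU)$ then so does $P^B(w)$, and so on down to $B$. Your tower argument --- cutting along the chord $ux$ and invoking the connectivity statement of Theorem~\ref{thm:tower} --- establishes the same monotonicity with heavier machinery; it is valid (the cut-off subpolygon is a tower, visibility inside it agrees with visibility in $\mP$ because a segment cannot cross the chord twice, and connectivity of the bipartite part forces positive degree at every non-apex vertex), but Property~\ref{prop:3} gives the step in one line.
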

\begin{proof}
If there is a pair of vertices $u\in\mU$ and $v\in\mV$ satisfying Property~\ref{prop:689}(a), the first part holds for the vertices of the subchain of the third side-chain $\mW$ which is visible from both $u$ and $v$. If there is no such a pair of vertices, without loss of generality assume that $B$ sees some vertices of $\mV$ and $v=LV^A(B,\mV)$. Trivially, the adjacent vertex of $B$ on side-chain $\mW$ sees both $B\in\mU$ and $v\in\mV$. This can be obtained directly from Property~\ref{prop:689}(a) by imaginary cloning $B$ as two separate vertices on $\mU$ and $\mV$. and adding new corner vertex $B$ as a point on the supporting line of $B$ and $v$ in the opposite side of $v$.

Having a vertex satisfying the first part, the second part follows from Property~\ref{prop:3}. \qed
\end{proof}

\begin{myproperty}\label{prop:7}
(Fig.~\ref{fig:prop7}) For any side-chain $\mW$ and a vertex $w\in\mW$ with distinct vertices $u=FV^A(w,\mU)$ and $v=FV^A(w,\mV)$, the vertices $u$ and $v$ are visible from each other.
\end{myproperty}
\begin{proof}
Let $\mP'$ be the subpolygon with $<A,\ldots,u,w,v,\ldots,A>$ as its boundary vertices. The vertex $w$ does not see any other vertex of $\mP'$ which means that the diagonal $uv$ must be used to triangulate $\mP'$. This means that $u$ and $v$ must be visible from each other. \qed
\end{proof}

\begin{figure}
\centering
\includegraphics[scale =.80]{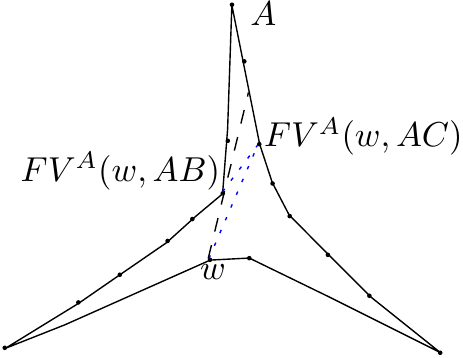} 
\caption{Property~\ref{prop:7}: $FV^A(w,\mU)$ and $FV^A(w,\mV)$ must see each other.}
\label{fig:prop7}
\end{figure}

\begin{myproperty}\label{prop:10}
(Fig.~\ref{fig:prop10}) For any side-chain $\mW$, let $u$ and $v$ be respectively the closest vertices on $\mU$ and $\mV$ to $A$ which are visible from some vertex (not necessarily the same) of $\mW$.
Then, there exists a nonempty subchain $[w,\ldots,w']$ in $\mW$ , $w'=N^B(w,l)$ and $l\geq0$, that either all vertices of this subchain are visible from both $u$ and $v$, or, $(w,w')$ is an edge of $\mW$ and $w$ sees $v$ and $w'$ sees $u$.
\end{myproperty}
\begin{proof}
It is simple to show that $(u,v)\in E$. Assume that there is no vertex on $\mW$ that sees both vertices $u$ and $v$. Then, we first show that there is a pair of vertices $w$ and $w'=N^B(w,l)$ where $w$ sees $v$ and $w'$ sees $u$. Let $w$ be $FV^C(v,\mW)$ and $w'$ be $FV^B(u,\mW)$. Trivially, $w\neq w'$ and $w$ is closer to $B$ than $w'$ (otherwise, $u$ and $v$ will be visible to both $w$ and $w'$). To complete the proof, it is enough to show that $w'=N^B(w)$. This is done by showing that any vertex $w''$ between $w$ and $w'$ on $\mW$ must see at least one of the vertices $u$ and $v$ which contradicts the definition of $w$ and $w'$.

Assume that there is a vertex $w''$ between $w$ and $w'$ and it sees neither $u$ nor $v$. 
In the tower polygon formed by boundary $<u,\ldots,B,\ldots,w'',w'>$, the blocking vertex for the invisible pair $(w'',u)$ must lie on $\mU$. Similarly, in the tower formed by boundary $<w,w'',\ldots,C,\ldots,v>$, the blocking vertex for the invisible pair $(w'',v)$ must lie on $\mV$. Therefore, at least one of the side-chains $\mU$ and $\mV$ must be convex which is a contradiction. So, $w''$ must see at least one of the vertices $u$ and $v$. \qed
\end{proof}

\begin{figure}
\centering
  \begin{tabular}{@{}cccc@{}}
    \includegraphics[width=.3\textwidth]{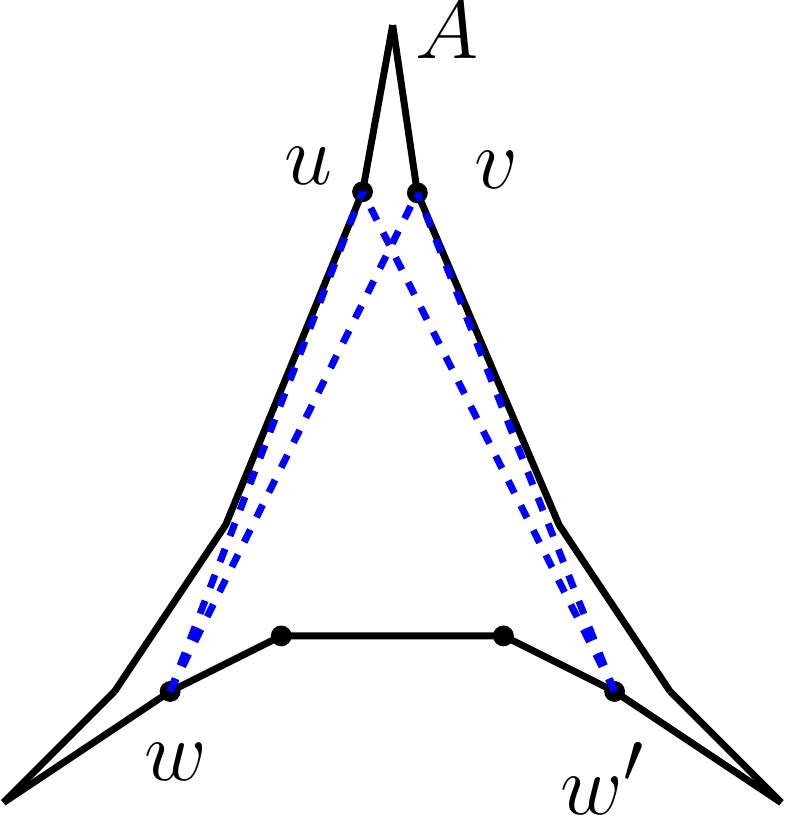} &
    \includegraphics[width=.3\textwidth]{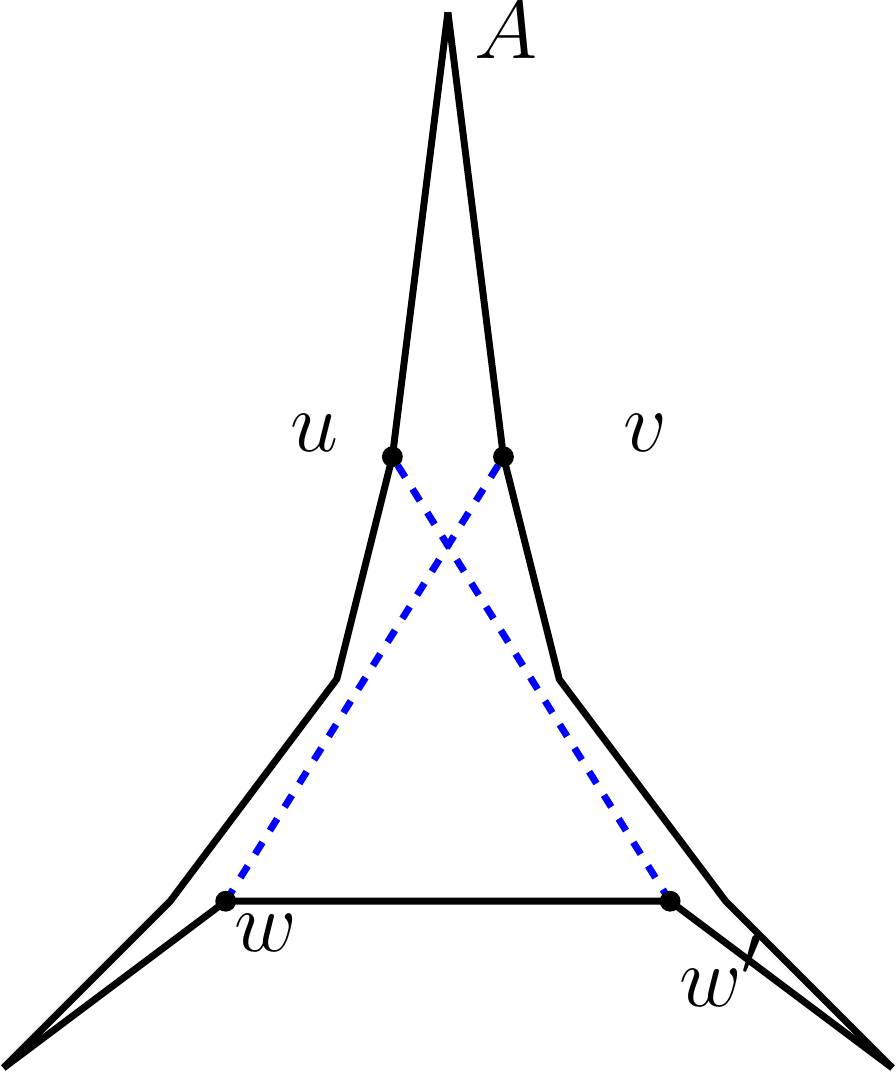} 
  \end{tabular}
\caption{Different cases of Property~\ref{prop:10}.}
\label{fig:prop10}
\end{figure}

\begin{corollary}\label{cor:5}
(Fig.~\ref{fig:cor5}) If $w$ and $w'$ satisfy the conditions of Property~\ref{prop:10}, then for $k>0$:
\begin{itemize}
\item $u_i=FV^A(P^B(w',k),\mU)$ is not closer to $A$ than \\ $u_j=FV^A(P^(w',k-1),\mU)$.
\item If there are vertices $v_i=FV^A(P^B(w,k),\mV)$ and \\ $v_j=FV^A(P^B(w,k-1),\mV)$, then $v_i$ is not closer to $A$ than $v_j$.
\end{itemize}
These mean that as we move from $w'$ to $w$ the topmost visible vertices of $\mU$ and $\mV$ go down along these chains.
\end{corollary}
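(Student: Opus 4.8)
We prove the two bullets separately; the first is immediate from Theorem~\ref{thm:tower}, while the second is the delicate one.

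\smallskip\noindent\emph{First bullet (chain $\mU$).} By Property~\ref{prop:10}, $w'$ sees $u$, so $(u,w')\in E$. Cut $\mP$ along this diagonal and let $T$ be the part bounded by $\langle u,\ldots,B,\ldots,w'\rangle$, i.e.\ by the concave subchain $[u,\ldots,B]$ of $\mU$, the convex corner $B$, the concave subchain $[B,\ldots,w']$ of $\mW$, and the edge $(u,w')$; thus $T$ is a tower with corner $B$ and base $(u,w')$. A segment joining two vertices of $T$ and staying inside $\mP$ cannot cross $(u,w')$ (two segments meet at most once), so the visibility graph of $T$ is exactly the subgraph of $\mG$ induced on the vertices of $T$, and by Theorem~\ref{thm:tower} it carries a strong ordering in which both partitions are ordered by distance from $B$; in particular, being connected, it forces every vertex of the $\mW$-side to see some vertex of $\mU$. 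Fix $k>0$ and put $x_1=P^B(w',k-1)$, $x_2=P^B(w',k)$, so $x_2<_{\mW}x_1$; since $u$ is the closest-to-$A$ vertex of $\mU$ seen from any vertex of $\mW$, $FV^A(x,\mU)$ is the $<_{\mU}$-largest vertex of $\mU$ incident with $x$ inside $T$. If $u_i=FV^A(x_2,\mU)$ were strictly closer to $A$ than $u_j=FV^A(x_1,\mU)$, then $u_j<_{\mU}u_i$, $(u_j,x_1),(u_i,x_2)\in E$ and $(u_i,x_1)\notin E$; applying the strong ordering to the pattern $(u_j<_{\mU}u_i,\ x_2<_{\mW}x_1)$ forces $(u_i,x_1)\in E$, a contradiction. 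Hence $Ind^A(u_i)\ge Ind^A(u_j)$, and iterating over $k$ gives the claim.

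\smallskip\noindent\emph{Second bullet (chain $\mV$).} The obstruction is that, since $v$ is the closest-to-$A$ vertex of $\mV$ visible from any vertex of $\mW$, the vertices $P^B(w,k)$ ($k>0$) all lie between $B$ and $w$ while the vertices $FV^A(P^B(w,k),\mV)$ all lie in $[v,\ldots,C]$; no single subpolygon cut off by one diagonal carries both families on its boundary, so the tower argument does not transfer. Instead, suppose for contradiction that $v_i=FV^A(P^B(w,k),\mV)$ is strictly closer to $A$ than $v_j=FV^A(P^B(w,k-1),\mV)$, and write $x=P^B(w,k)$, $x^{+}=P^B(w,k-1)=N^B(x)$, so $x$ and $x^{+}$ are consecutive on $\mW$, $x$ is the closer one to $B$, $x$ sees $v_i$, and $x^{+}$ does not. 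Let $b$ be a blocking vertex for the invisible pair $(x^{+},v_i)$. On the arc from $x^{+}$ to $v_i$ that runs through $B$ and $A$ one meets $x$ (immediately after $x^{+}$) and all of $\mU$, in particular $FV^A(x^{+},\mU)$; since $(x,v_i)\in E$ and $(x^{+},FV^A(x^{+},\mU))\in E$, each of these is a visible pair straddling any $b$ placed on that arc. Hence $b$ lies on the arc through $C$, so $b$ is either a vertex of $\mW$ strictly between $x^{+}$ and $C$, or a vertex of $\mV$ strictly between $v_j$ and $v_i$.

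\smallskip\noindent\emph{The main obstacle.} Ruling out these two positions for $b$ is the step that cannot be reduced to a tower and is the hardest part. For $b\in\mV$ with $v_j<b<v_i$ I would show that $x$ nevertheless sees $b$: Corollary~\ref{cor:1} applied to the pair $\mW,\mV$ at corner $C$ gives that $FV^C(x,\mV)$ is not closer to $C$ than $FV^C(x^{+},\mV)$, which lies below $v_j$, and together with the connectedness of the visible interval on $\mV$ (Property~\ref{prop:2}) this places $b$ in the interval of $\mV$ seen from $x$; then Property~\ref{prop:3} propagates visibility past $b$ up to a vertex above $v_i$, contradicting $v_i=FV^A(x,\mV)$. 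A blocker $b\in\mW$ between $x^{+}$ and $C$ is treated analogously, now invoking the first bullet for monotonicity along $\mW$. The delicate point, which I expect to require the most care, is exactly this last propagation: with no ambient tower it has to be extracted by combining Corollary~\ref{cor:1}, Properties~\ref{prop:2} and \ref{prop:3} and the first bullet — equivalently, by checking that the strictly smaller polygon $\langle v_j,\ldots,A,\ldots,B,\ldots,x^{+}\rangle$ obtained by cutting $\mP$ along $(v_j,x^{+})$ is again a pseudo-triangle, identifying its three corners, and applying the already-proved statements to it — and in particular by verifying that $b$ cannot sit so close to $v_i$ that $x$ fails to see it. The remaining bookkeeping is routine.
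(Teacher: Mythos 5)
Your first bullet is essentially the paper's own argument: the same tower over the diagonal $(w',u)$ (note that $u=FV^A(w',\mU)$ precisely because $u$ is the globally closest-to-$A$ vertex of $\mU$ seen from $\mW$), followed by the same strong-ordering contradiction. That part is correct.

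The second bullet is where you leave the paper's route, and it has a genuine gap. Your plan reduces to disqualifying a blocking vertex $b\in\mV$ with $Ind^A(v_i)<Ind^A(b)\le Ind^A(v_j)$ on the arc from $x^{+}$ to $v_i$ through $C$. To disqualify such a $b$ you must exhibit a visible pair straddling $b$ \emph{on that arc}; showing that ``$x$ sees $b$'' accomplishes nothing, because $x$ lies on the other arc and cannot serve as the near element of a straddling pair for the $C$-arc. Moreover, the tools you cite would not even establish that $x$ sees $b$: Corollary~\ref{cor:1} applied to $\mW,\mV$ at $C$ bounds $FV^C(x,\mV)=LV^A(x,\mV)$ from the wrong side (it says the visible window of $x$ on $\mV$ reaches no farther toward $C$ than that of $x^{+}$, which is useless for pushing it down past $b$), and Property~\ref{prop:3} propagates a first-visible vertex to the \emph{previous} vertex of the other chain, not to a vertex of $\mV$ closer to $A$ than $v_i$ as seen from $x$; so the claimed contradiction with $v_i=FV^A(x,\mV)$ has no route. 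The ingredient you never use is the hypothesis of Property~\ref{prop:10} that $w$ sees $v$: since $v$ is not farther from $A$ than $v_i$ and $w\in[x^{+},\ldots,C]$ lies strictly past $x^{+}$ once $k\ge2$, the pair $(w,v)$ straddles every remaining candidate $b$ on the $C$-arc, while $(x^{+},v_j)$ handles those between $x^{+}$ and $v_j$; that is how your case analysis could be closed. The paper instead avoids the case analysis entirely: from $(P^B(w,k),v_i)\in E$ and $(w,v)\in E$, Corollary~\ref{cor:2:2} for the chains $\mW,\mV$ at corner $C$ --- exactly the strong-ordering transfer you declared unavailable --- gives $(w,v_i)\in E$, and Property~\ref{prop:2} then forces the intermediate vertex $P^B(w,k-1)$ to see $v_i$, contradicting $v_j=FV^A(P^B(w,k-1),\mV)$. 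As submitted, your second bullet is an announced plan whose announced steps do not go through.
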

\begin{proof}
For the sake of contradiction, assume that $u_i$ is closer to $A$ than $u_j$. 
The diagonal edge $(w',FV^A(w',\mU))$ along with vertices $<w',\ldots,B,\ldots,FV^A(w',\mU)>$ form a tower polygon which contains the vertices $u_i$ and $u_j$, and satisfies strong ordering. When both edges $(P^B(w',k),u_i)$ and $(P^B(w',k-1),u_j)$ exist in the visibility graph, the edge $(P^B(w',k-1),u_i)$ must also exist in $E$.

We prove the second part by contradiction. Let $P^B(w, l)$ be the closest vertex of $\mW$ to $B$ which sees at least one vertex from $\mV$ ($l\geq0$). For $l\geq k>0$, assume that 
$v_i$ is closer to $A$ than $v_j$. Since $FV^A(w,\mV)$ is not farther from $A$ than $v_i$, Corollary~\ref{cor:2:2} implies that $v_i$ sees $w$. According to Property~\ref{prop:2}, $v_i$ is also visible from $P^B(w,k-1)$ which is a contradiction. \qed
\end{proof}

\begin{figure}
\centering
\includegraphics[scale =0.45]{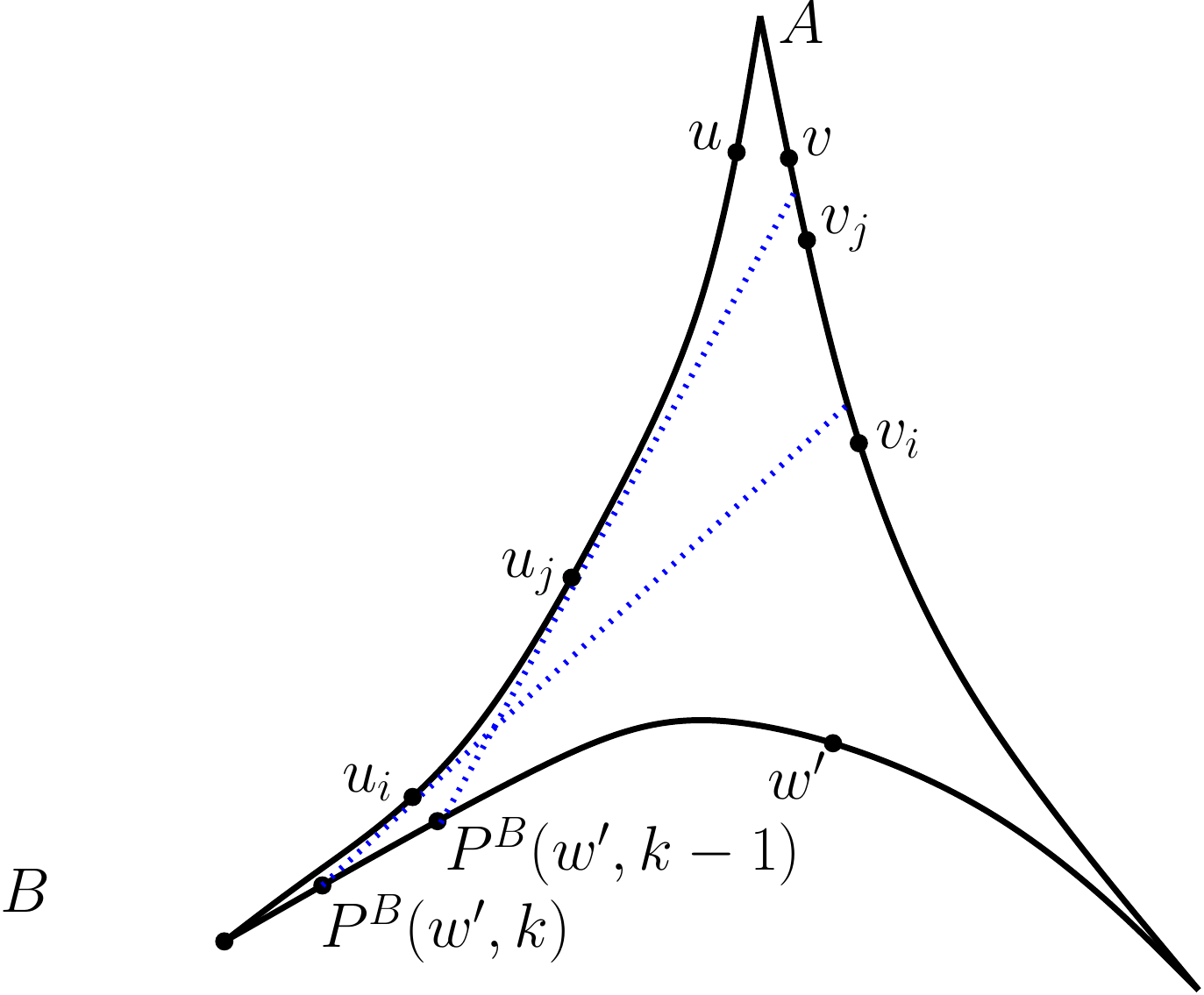} 
\caption{Corollary~\ref{cor:5}: $u_i$ (resp. $v_i$) is not closer to $A$ than $u_j$ (resp. $v_j$).}
\label{fig:cor5}
\end{figure}

\begin{myproperty}\label{prop:11}
Let $[w_i,\dots,w_j]$ be the subchain of $\mW$ satisfying Property~\ref{prop:10} and for any vertex $w\in\mW$, $u=FV^A(w,\mU)$ and $v=FV^A(w,\mV)$ are the closest vertices to $A$ which are visible to $w$. Then:
\begin{itemize}
\itemsep0.5pt
\item[a] If $w\in [w_i,\dots,w_j]$, then at least one of the pairs $(N^A(u),P^A(v))$ and \\ $(P^A(u),N^A(v))$ are invisible.
\item[b] If $w\in [B,\dots,w_j]$ and $(N^A(u),P^A(v))$ are invisible vertices, then this happens for all vertices in $[B,\ldots,w]$ and $LV^A(w,\mV)$ is not farther from $A$ than $LV^A(N^B(w),\mV)$. This is symmetrically true when $w\in [w_i,\dots,C]$ and $(P^A(u),N^A(v))$ are invisible vertices.
\item[c] If $w\neq B$ is closer to $B$ than $w_i$, then $(N^A(u),P^A(v))$ is an invisible pair. Symmetrically, $(P^A(u),N^A(v))$ are invisible vertices when $w\neq C$ is closer to $C$ than $w_j$.
\end{itemize}
\end{myproperty}
\begin{proof}
(a) Consider the subpolygon $\mP'$ with boundary $<w,v,\ldots,A,\ldots,u>$. The pairs $(w,P^A(v))$ and $(w,P^A(u))$ are invisible. These pairs share the same blocking vertex. If $u$ is the blocking vertex, then $(N^A(u),P^A(v))$ is an invisible pair, and if $v$ is the blocking vertex, then the pair $(P^A(u),N^A(v))$ is invisible. 

(b) Assume that $(N^A(u),P^A(v))$ are invisible from each other. This means that the visible vertices of $\mV$ from $w$ is bounded from above by vertices of $\mU$. This will happen for all vertices in $[B,\ldots,w]$ as well. A similar argument holds when $(P^A(u),N^A(v))$ is the invisible pair. 

(c) It is clear that at least one of the vertices $FV^A(w_i,\mU)$ and $FV^A(w_i,\mV)$ is farther from $A$ than $u$ and $v$.
For the sake of a contradiction, assume that $(N^A(u),P^A(v))$ is a visible pair. Then, in subpolygon $\mP'=<w,v,\ldots,A,\ldots,u>$, $v$ must be the blocking vertex for the pairs $(w,P^A(v))$ and $(w,P^A(u))$. This vertex also blocks the pairs $(N^B(w),P^A(u,i))$ and $(N^B(w,l),P^A(v,j))$. But, for some $l>0$ and $i$ and $j\geq0$, $N^B(w,l)=w'$, $P^A(u,i)=FV^A(w',\mU)$, and $P^A(v,j)=FV^A(w_i,\mV)$ which contradicts the definition of $w_i$. \qed
\end{proof}

As mentioned earlier, Ghosh introduced four necessary conditions for a visibility graph of a simple polygon. It is simple to show that these conditions are derived from the properties described in this section which meas that these properties includes Ghosh's conditions.

\section{Pseudo-Triangle Reconstruction}\label{sec:reconstruction}
In this section, $\mG(V,E)$ denotes the visibility graph of a pseudo-triangle $\mP$ with $\mU$, $\mV$, and $\mW$ side-chains and the order of vertices on the boundary of $\mP$ is specified by a Hamiltonian cycle $\mH=<A,\ldots,C,\ldots,B,\ldots,A>$ in $\mG$. 
We assume that the inputs $\mG$ and $\mH$ satisfy the properties~\ref{prop:1} to~\ref{prop:11}. 
We propose an algorithm for reconstructing a pseudo-triangle corresponding to the given pair of $\mG$ and $\mH$.

In order to reconstruct the pseudo-triangle $\mP$, we divide $\mP$ into four subpolygons $\mX$, $\mY$, $\mZ$, and $\mZ'$ as shown in Fig.~\ref{fig:part} and reconstruct each one separately. For the sake of brevity, $u_i=N^A(A,i)$ on side-chain $\mU$, $v_j=N^A(A,j)$ on side-chain $\mV$, and $w_k=N^B(B,k)$ on side-chain $\mW$ where $i,j,k\geq0$. We assume that $\mU$ and $\mV$ have respectively $\alpha+1$ and $\delta+1$ vertices.

\begin{figure}
\centering
\includegraphics[scale =1.0]{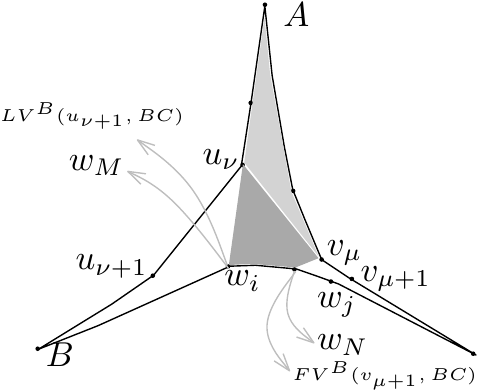} 
\caption{The partitions of the initial polygon in reconstruction algorithm: the light-gray region is $\mX$, the dark-gray is $\mY$ and the white parts are $\mZ$ and $\mZ'$.}
\label{fig:part}
\end{figure}

The subpolygon $\mX$ is formed by subchains $[A,\ldots,u_\nu]$ and $[A,\ldots,v_\mu]$ and edge $(u_\nu,v_\mu)$ where $LV^A(u_\nu,\mV)=v_\mu$ and $LV^A(v_\mu,\mU)=u_\nu$. The vertices $u_\nu$ and $v_\mu$ are identified by walking alternatively on side-chains $\mU$ and $\mV$ from corner vertex $A$ towards $B$ and $C$. As a step of this trace, assume that we are at vertices $u_i$ and $v_j$ and want to go one step further on $\mU$. If $u_i$ is the last vertex on $\mU$ or $v_j$ does not see $u_{i+1}$ we fix $u_i$ as $u_\nu$. Otherwise, we go to $u_{i+1}$ in this step. Walking on side-chain $\mV$ are done similarly. The subpolygon $\mX$ is a tower polygon with strong ordering in its visibility graph. 
Note that $u_{\nu+1}$ or $v_{\mu+1}$ exists only when the side-chain $\mW$ has more than one edge, otherwise, two identified adjacent corners $u_\nu$ and $v_\mu$ compose the base of a tower polygon which can be constructed by the tower reconstruction algorithm. So, we assume that $\mW$ has more than one edge.

The subpolygon $\mY$ is identified as follows: Let $[w_i,\ldots,w_j]$ be the maximum subchain of $\mW$ visible from both $u_\nu$ and $v_\mu$. According to Property~\ref{prop:689}(a), this chain is nonempty and continuous. Let $LV^B(u_{\nu+1},\mW)=w_k$ and $FV^B(v_{\mu+1},\mW)=w_l$. From Property~\ref{prop:689}(b), $k\leq j$ and $l\geq i$ and from Property~\ref{prop:689}(c), $k\leq l$. We define $M$ and $N$ as $\max(k,i)$ and $\min(l,j)$, respectively. It is clear that chain $[w_M,\ldots,w_N]$ contains at least one vertex. Then, $\mY$ is defined to be the polygon with $<u_\nu,w_M,\ldots,w_N,v_\mu>$ as its boundary.

The subpolygon $\mZ$ is formed by subchains $[u_\nu,\ldots,B)]$ and $[B,\ldots,w_M]$ and edge $(u_\nu,w_M)$. Similarly, subchains $[v_\mu,\ldots,C]$ and $[C,\ldots,w_N]$ and edge $(v_\mu,w_N)$ specify the subpolygon $\mZ'$. It is clear that $\mP$ is the union of $\mX$, $\mY$, $\mZ$, and $\mZ'$.

Our reconstruction algorithm first builds $\mX$ using the tower reconstruction algorithm in such a way that vertices of $\mU$ lie to the left of vertices of $\mV$.
Then, we extend this polygon to build $\mY$~(Section~\ref{sec:recontructY}) and build and attach $\mZ$ and $\mZ'$ parts to this polygon~(Section~\ref{sec:recontructZ}) to complete the construction procedure.
\vspace{-2mm}
\subsection{Reconstructing $\mY$}\label{sec:recontructY}
\vspace{-2mm}
In this step, we build the subpolygon $\mY=<u_\nu,w_M,\ldots,w_N,v_\mu>$. We know the position of vertices $u_\nu$ and $v_\mu$ from the previous step, which are also on the boundary of $\mY$. To locate positions of other vertices, we show that there are nonempty regions in which these vertices can be placed.

For any vertex $w_j$ from $\mY$ which $FV^A(w_j,\mU)=u_i$ and $FV^A(w_j,\mV)=v_l$, we define a region $\sW^j_{i,l}$ from which each point sees all vertices in the subchains $[u_i,\ldots,u_\nu]$ and $[v_l,\ldots,v_\mu]$. Therefore, $w_j$ can be placed in $\sW^j_{i,l}$ satisfying the visibility constraints between $w_i$ and vertices of $\mX$. We use $\sW^j$ instead of $\sW^j_{i,l}$ whenever $i$ and $l$ indices are not important. The region $\sW^j$ is determined as follows: Since $w_j$ sees $u_\nu$ and $v_\mu$, the vertices $u_i$ and $v_l$ always exist and are well-defined. 
If $u_i$ and $v_l$ are identical, then $i=l=0$ and the region $\sW^j=\sW^j_{0,0}$ is defined to be the part of the cone formed by the lines through $(A,u_1)$ and $(A,v_1)$ restricted to the underneath of the line through points $u_\nu$ and $v_\mu$. Trivially, each point of $\sW^j$ sees all vertices $u_\nu,\ldots,A,\ldots,v_\mu$.

Let $\mF_z(x,y)$ be the `$z$' half-plane defined by the line through $x$ and $y$ where `$z$' is `b' (bottom), `r' (right), or `l' (left). 
If $u_i$ and $v_l$ are distinct vertices, according to Property~\ref{prop:11}, at least one of the pairs $(u_{i+1},v_{l-1})$ and $(u_{i-1},v_{l+1})$ do not see each other. The invisible pair is determined by applying Corollary~\ref{cor:5} and Property~\ref{prop:11}. 

Assume that $(u_{i+1},v_{l-1})$ is the invisible pair. Then, $\sW^j_{i,l}$ is defined to be $\mF_r(s_{i+1},u_i)\bigcap\mF_r(v_l,u_i)\bigcap\mF_l(u_{i-1},u_i)\bigcap\mF_l(v_{l-1},u_i)\bigcap\mF_b(v_\mu,u_\nu)$
~(Fig.~\ref{fig:pseudo_W_il}). As defined in Section~\ref{sec:Tower}, $s_{i+1}$, $u_i$ and $u_{i+1}$ are collinear. We used $\mF_r(s_{i+1},u_i)$ instead of $\mF_r(u_{i+1},u_i)$ here because at least for $i=\nu$ we do not know the position of $u_{i+1}$ yet. 

Any one of these half-planes forces some visibility constraints for $w_j$. $\mF_b(v_\mu,u_\nu)$ implies that $w_j$ sees both $u_nu$ and $v_\mu$; $\mF_r(s_{i+1},u_i)$ implies that $w_j$ sees all vertices $<u_i,\ldots,u_\nu>$; $\mF_r(v_l,u_i)$ implies that $w_j$ sees all vertices $<v_l,\ldots,v_\mu>$; $\mF_l(u_{i-1},u_i)$ prevents $w_j$ from seeing vertices $<A,\ldots,u_{i-1}>$; and $\mF_l(v_{l-1},u_i)$ prevents $w_j$ from seeing vertices $<A,\ldots,v_{l-1}>$. Therefore, all points in this region satisfy the visibility constraints from $w_j$ to vertices $<u_\nu,\ldots,A,\ldots,v_\mu>$.

\begin{figure}
\begin{center}
\begin {tikzpicture}[thick,scale=0.65,y=0.8cm]
\draw (-1.25,0) -- (0,1) -- (0.5,2)  (1.25,3) -- (1.75,2) --  (4,-1) (0.5,2) -- (1,4) -- (1.25,3) (-1.25,0) -- (-3,-1);
\draw [dashed] (0,1) -- (1.6,2.3) (-1.25,0) -- (-3.75,-2) (1.75,2) -- (-5.25,-2);
\draw [dotted] (0.5,2) -- (1.15,3.25) (0,1) -- (-1.5,-2) (1.25,3) -- (-1.875,-2) (-5,-1) -- (4.5,-1);
\draw [fill] (-1.25,0) circle [radius=.05];
\node [above left] at (-1.25,0) {$u_{i+1}$};
\draw [fill] (0,1) circle [radius=.05];
\node [above left] at (0,1) {$u_i$};
\draw [fill] (0.5,2) circle [radius=.05];
\node [left] at (0.5,2) {$u_{i-1}$};
\draw [fill] (1,4) circle [radius=.05];
\draw [fill] (1.25,3) circle [radius=.05];
\node [right] at (1.25,3) {$v_{l-1}$};
\draw [fill] (1.75,2) circle [radius=.05];
\node [right] at (1.75,2) {$v_l$};
\draw [fill] (-3,-1) circle [radius=.05];
\node [above left] at (-3,-1) {$u_\nu$};
\draw [fill] (4,-1) circle [radius=.05];
\node [above right] at (4,-1) {$v_\mu$};
\node [right] at (1.5,2.4) {$s_{i+1}$};
\node [right] at (1,3.5) {$s_i$};
\draw [->] (-5,-1) -- (-5,-1.45);
\draw [->] (-5.25,-2) -- (-4.8,-2);
\draw [->] (-3.75,-2) -- (-3.3,-2);
\draw [->] (-1.875,-2) -- (-2.125,-2);
\draw [->] (-1.5,-2) -- (-1.75,-2);
\node [left] at (-5.5,-2) {$\mF_r(v_l,u_i)$};
\node [below] at (-4.5,-2.2) {$\mF_r(s_{i+1},u_i)$};
\node [right] at (-1.5,-2) {$\mF_l(u_{i-1},u_i)$};
\node [below] at (-1.5,-2) {$\mF_l(v_{l-1},u_i)$};
\node [left] at (-5,-0.5) {$\mF_b(v_\mu,u_\nu)$};
\path [fill=black,opacity=0.1] (-3.75,-2) -- (-2.5,-1) -- (-1.25,-1) -- (-1.875,-2);
\end{tikzpicture}
\end{center}
\vspace{-7mm}
\caption{$\sW^j_{i,l}$ is the shaded region.}
\label{fig:pseudo_W_il}
\end{figure}

Concavity of $\mU$ and $\mV$ implies that intersections $\mF_r(s_{i+1},u_i)\bigcap\mF_l(u_{i-1},u_i)$ and $\mF_r(v_l,u_i)\bigcap\mF_l(v_{l-1},u_i)$ are not empty. Therefore, $\sW^j_{i,l}$ will be empty only when $\mF_r(s_{i+1},u_i)\bigcap\mF_l(v_{l-1},u_i)$ is empty or $\mF_r(v_l,u_i)\bigcap\mF_l(u_{i-1},u_i)$ is empty. The first case is impossible, because otherwise, $u_{i+1}$ must be visible from $v_{l-1}$ which is in contradiction with invisibility assumption of $(u_{i+1},v_{l-1})$. The second case is also impossible, because then, the pair $u_i$ and $v_l$ must be invisible. But, according to Property~\ref{prop:7}, $u_i$ and $v_l$ must be visible from each other. 

Therefore, the region $\mF_r(s_{i+1},u_i)\bigcap\mF_r(v_l,u_i)\bigcap\mF_l(u_{i-1},u_i)\bigcap\mF_l(v_{l-1},u_i)$ is nonempty and some part of this intersection lies in half-plane $\mF_b(v_\mu,u_\nu)$.

According to the above discussion, $\sW^j$ is defined by $\mF_b(v_\mu,u_\nu)$ and two half-planes of $\{\mF_r(s_{i+1},u_i),\mF_r(v_l,u_i),\mF_l(u_{i-1},u_i),\mF_l(v_{l-1},u_i)\}$. The apex of $\sW^j$ is defined to be the intersection of the corresponding lines of these two half-planes which is $u_i$.

The above discussions was for the assumption that $(u_{i+1},v_{l-1})$ is the invisible pair. The description for the cases where $(u_{i-1},v_{l+1})$ is the invisible pair is symmetric: $\sW^j_{i,l}$ is $\mF_l(r_{l+1},v_l)\bigcap\mF_l(v_l,u_i)\bigcap\mF_r(v_{l-1},v_l)\bigcap\mF_r(u_{i-1},v_l)\bigcap\mF_b(v_\mu,u_\nu)$ and the apex of $\sW^j$ will be $v_l$.

If the apex of $\sW^j$ lies on $\mU$, Property~\ref{prop:11} implies that the apex of $\sW^{j-1}$ will lie on $\mU$ as well. Furthermore, Corollary~\ref{cor:5} implies that $\sW^{j-1}$ is either completely coinciding $\sW^j$ or is completely on its left.
Similarly, if the apex of $\sW^j$ lies on $\mV$, then the apex of $\sW^{j+1}$ lies on $\mV$ as well, and $\sW^{j+1}$ is either coinciding $\sW^j$ or is completely on its right.

Then, we can place the vertices $w_M,\ldots,w_N$ of $\mY$ on an arbitrary concave chain inside $\mF_b(v_\mu,u_\nu)$ in such a way that $w_j\in\sW^j$. This placement satisfies the visibility constraints for $\mX$ and $\mY$. However, to guarantee the reconstruction of $\mZ$ and $\mZ'$, we define some constraints on this concave chain which is described in the rest of this section.

Let $s'_i$ ($i>\nu$) be the intersection of $\mV$ and the line through $u_i$ and $LV^B(u_i,\mW)$, 
$r'_k$ ($k>\mu$) be the intersection of $\mU$ and the line through $v_k$ and $FV^B(v_k,\mW)$,
$t'_j$ ($j<M$) be the intersection of $\mV$ and the line through $w_j$ and $w_{j+1}$, and
$t'_j$ ($j>N$) be the intersection of $\mU$ and the line through $w_j$ and $w_{j-1}$ (see Fig.~\ref{fig:pseudo_t_r_s}). 

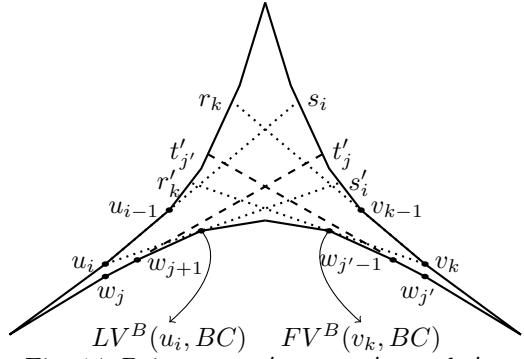
\begin{figure}
\begin{center}
\begin{tikzpicture}[thick,scale=0.85,y=0.65cm]
\draw (0,4) -- (-0.4,2) -- (-1,0) -- (-1.5,-1) -- (-2.5,-2.3) -- (-4,-4) -- (-2.5,-2.6) -- (-2,-2.2) -- (-1,-1.5) -- (0,-1.25);
\draw (0,4) -- (0.4,2) -- (1,0) -- (1.5,-1) -- (2.5,-2.3) -- (4,-4) -- (2.5,-2.6) -- (2,-2.2) -- (1,-1.5) -- (0,-1.25);
\draw [fill] (-1.5,-1) circle [radius=.04];
\node [left] at (-1.5,-1) {$u_{i-1}$};
\draw [fill] (-2.5,-2.3) circle [radius=.04];
\node [left] at (-2.5,-2.3) {$u_i$};
\draw [fill] (-2.5,-2.6) circle [radius=.04];
\node [below] at (-2.4,-2.6) {$w_j$};
\draw [fill] (-2,-2.2) circle [radius=.04];
\node [right] at (-2,-2.4) {$w_{j+1}$};
\draw [fill] (-1,-1.5) circle [radius=.04];
\draw [thin] (-1,-1.5) edge[out=-50,in=40,->] (-1.5,-3.7);
\node [below] at (-1.5,-3.5) {$LV^B(u_i,\mW)$};
\draw [fill] (1.5,-1) circle [radius=.04];
\node [right] at (1.5,-1) {$v_{k-1}$};
\draw [fill] (2.5,-2.3) circle [radius=.04];
\node [right] at (2.5,-2.3) {$v_k$};
\draw [fill] (2.5,-2.6) circle [radius=.04];
\node [below] at (2.4,-2.6) {$w_{j'}$};
\draw [fill] (2,-2.2) circle [radius=.04];
\node [left] at (2,-2.3) {$w_{j'-1}$};
\draw [fill] (1,-1.5) circle [radius=.04];
\draw [thin] (1,-1.5) edge[out=-130,in=140,->] (1.5,-3.7);
\node [below] at (1.5,-3.5) {$FV^B(v_k,\mW)$};
\node [right] at (1.15,-0.35) {$s'_i$};
\node [right] at  (0.5,1.6) {$s_i$};
\node [left] at  (-1.15,-0.35) {$r'_k$};
\node [left] at (-0.5,1.6) {$r_k$};
\node [right] at (0.9,0.35) {$t'_j$};
\node [left] at (-0.9,0.35) {$t'_{j'}$};
\draw [dotted] (-2.5,-2.3) -- (1.15,-0.35);
\draw [dotted] (-2.5,-2.3) -- (0.5,1.6);
\draw [dotted] (2.5,-2.3) -- (-1.15,-0.35);
\draw [dotted] (2.5,-2.3) -- (-0.5,1.6);
\draw [dashed] (-1.8,-2) -- (0.9,0.35);
\draw [dashed] (1.8,-2) -- (-0.9,0.35);
\end{tikzpicture}
\end{center}
\vspace{-7mm}
\caption{Points $s_{(\cdot)}$, $s'_{(\cdot)}$, $r_{(\cdot)}$, $r'_{(\cdot)}$, and $t'_{(\cdot)}$.}
\label{fig:pseudo_t_r_s}
\end{figure}

Note that although we have not yet determined positions of vertices defining $s'_i$, $r'_k$, and $t'_j$, we determine their containing edges from the visibility information as follows:
for $i>\nu$, if $u_i$ sees at least one vertex from $\mV$, $s_i$ lies on the segment connecting $P^A(FV^A(u_i,\mV))$ and $FV^A(u_i,\mV)$ and $s'_i$ lies on the segment connecting $(LV^A(u_i,\mV)$ and $N^A(LV^A(u_i,\mV))$.
On the other hand, if $u_i$ sees no vertex from $\mV$, then for $k\geq i$, both $s_k$ and $s'_k$ lie on the segment connecting $P^A(LV^A(u_j,\mV))$ and $LV^A(u_j,\mV)$ where $u_j$ has the highest index among the vertices of $\mU$ that see at least one vertex from $\mV$. Corollary~\ref{cor:3} implies that all these points lie on boundary edges of $\mX$, except when $i=\nu+1$ and $w_{M-1}$ is visible to both $u_\nu$ and $v_\mu$, for which both $s_k$ and $s'_k$ for $k\geq i$ lie on $(v_\mu,v_{\mu+1})$. The same situation happens for $r_l$ and $r'_l$ when $l>\mu$. 

The containing edge of $t'_j$ for $j<M$ is determined as follows: If $w_j$ sees at least one vertex from $\mV$, then $t'_j$ lies on the segment connecting $LV^A(w_j,\mV)$ and $N^A(LV^A(w_j,\mV))$, otherwise, it lies on the containing edge of $s'_\alpha$ 
(Note that according to our assumption at the beginning of Section~\ref{sec:reconstruction}, $\alpha$ and $\delta$ are respectively the greatest indices of vertices $u_i$ and $v_j$ on $\mU$ and $\mV$ side-chains.).
Similarly, for $j>N$, if $w_j$ sees at least one vertex from $\mU$, then $t'_j$ lies on the segment connecting $LV^A(w_j,\mU)$ and $N^A(LV^A(w_j,\mU))$, and otherwise, it lies on the containing edge of $r'_\delta$. Property~\ref{prop:11} implies that all these points lie on boundary edges of $\mX$ or edges $(v_\mu,v_{\mu+1})$ and $(u_\nu,u_{\nu+1})$.

The containing edges of $s'_\alpha$ and $r'_\delta$ are respectively called {\it``the floating edge in $\mV$''} and {\it ``the floating edge in $\mU$''}. We call these edges floating because we increase their length, and reposition their underneath vertices to enforce the concavity in building $\mZ$ and $\mZ'$.

We define the vertices $w_{M^*}$ and $w_{N^*}$ as follows: If $\mW$ has two edges, then $w_M$ and $w_N$ are both equal to $w_1$ (the middle vertex of $\mW$), and $w_{M^*}$ and $w_{N^*}$ are also defined to be $w_1$. When $\mW$ has more than two edges, $M^*$ is defined to be $M$ when the apex of $\sW^M$ does not lie on a vertex of $\mV$ below its floating edge. Otherwise, $M^*$ is defined to be $j$ where $j$ is the maximum index for which the apex of $\sW^j$ lies above the floating edge of $\mV$ (this apex may lie on $\mU$). 
If the index of $Ind^A(FV^A(w_{M^*},\mU))$ is greater than $\nu$, the apex of $\sW^{M^*}$ is temporarily assumed to be $u_\nu$ and $\sW^{M^*}$ is defined to lie between $\mF_r(s_{i+1},u_\nu)$ and $\mF_l(u_{\nu-1},u_\nu)$.
The index $N^*$ is defined similarly. It is clear that at least one of the equalities $w_{M^*}=w_M$ or $w_{N^*}=w_N$ holds.

We use $\mR(x,y)$ to denote the ray from $x$ towards $y$. In addition, $\mR_a(x,y)$ denotes the ray from $a$ and parallel to $\mR(x,y)$ (Fig.~\ref{fig:pseudo_rays}).

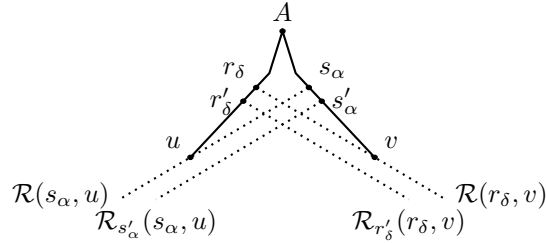
\begin{figure}
\begin{center}
\begin{tikzpicture}[thick,scale=0.7,y=0.8cm]
\draw [fill] (0,3) circle [radius=.04];
\node [above] at (0,3) {$A$};
\draw [fill] (-1.75,0) circle [radius=.04];
\node [above left] at (-1.75,0) {$u$};
\draw [fill] (1.75,0) circle [radius=.04];
\node [above right] at (1.75,0) {$v$};
\draw [fill] (0.5,1.66) circle [radius=.04];
\node [above right] at (0.5,1.66) {$s_\alpha$};
\draw [fill] (0.745,1.335) circle [radius=.04];
\node [right] at (0.745,1.335) {$s'_\alpha$};
\draw [fill] (-0.5,1.66) circle [radius=.04];
\node [above left] at (-0.5,1.66) {$r_\delta$};
\draw [fill] (-0.745,1.335) circle [radius=.04];
\node [left] at (-0.745,1.335) {$r'_\delta$};
\draw (-1.75,0) -- (-0.25,2) -- (0,3) -- (0.25,2) -- (1.75,0);
\node [left] at (-3.1,-0.9) {$\mR(s_\alpha,u)$};
\node [below] at (-2.4,-1) {$\mR_{s'_\alpha}(s_\alpha,u)$};
\node [right] at (3.1,-0.9) {$\mR(r_\delta,v)$};
\node [below] at (2.4,-1) {$\mR_{r'_\delta}(r_\delta,v)$};
\draw[dotted] (0.5,1.66) -- (-1.75,0) -- (-3.1,-1); 
\draw[dotted] (0.745,1.335) -- (-2.4,-1); 
\draw[dotted] (-0.5,1.66) -- (1.75,0) -- (3.1,-1); 
\draw[dotted] (-0.745,1.335) -- (2.4,-1); 
\end{tikzpicture}
\end{center}
\vspace{-7mm}
\caption{The rays $\mR(s_\alpha,u)$, $\mR_{s'_\alpha}(s_\alpha,u)$, $\mR(r_\delta,v)$, and $\mR_{r'_\delta}(r_\delta,v)$.}
\label{fig:pseudo_rays}
\end{figure}  

Despite our definition of the regions $\sW^i$ for all vertices $w_i\in\mW$, we refine this definition for $\sW^{N^*}$ ({\it resp.} $\sW^{M^*}$) when $N^*\neq N$ ({\it resp.} $M^*\neq M$) or the floating edge of $\mV$ ({\it resp.} $\mU$) lies under the line through $u_\nu$ and $v_\mu$.
At most one of the floating edges lies under $\mR(u_\nu,v_\mu)$. Because otherwise, either $v_mu$ will see $u_{\nu+1}$ or $u_\nu$ will see $v_{\mu+1}$ which is in contradiction with the selection of $u_\nu$ and $v_\mu$. Let $v$ be a point on $\mR_{v_\mu}(r_{\mu+1},v_\mu)$ when the floating edge of $\mV$ lies under $\mR(u_\nu,v_\mu)$, or be $v_\mu$ otherwise. Similarly, $u$ is defined to be either $u_\nu$ or a point on $\mR_{u_\nu}(s_{\nu+1},u_\nu)$. The regions $\sW^{N^*}$ and $\sW^{M^*}$ are restricted to lie under the line through $u$ and $v$. Moreover, we know that at most one of the indices $M^*$ and $N^*$ is not equal to its corresponding index $M$ or $N$. Without loss of generality, assume that $N^*\neq N$. Then, we additionally restrict the region $\sW^{N^*}$ as follows (this restriction is not applied when we reconstruct $\mZ$ or $\mZ'$).
Let $p$ be a point inside the intersection of $\sW^N$ and $\mF_b(u,v)$ and with an arbitrary positive distance from $\mR(u,v)$. We determine $t'_{N^*}$ on its edge and with $\epsilon l$ distance above the lower endpoint of this edge where $\epsilon>0$ and $l$ is the number of vertices in $\mV$ and $\mW$ whose $r'_{(\cdot)}$'s and $t'_{(\cdot)}$'s lie on this edge. The region $\sW^{N^*}$ is restricted to lie under the line through $t'_{N^*}$ and $p$ (see Fig.~\ref{fig:w_regions}).

Let $s_\alpha$ be a point on its edge and with $\epsilon k$ distance below the upper endpoint of this edge where $\epsilon>0$ and $k$ is the number of vertices in $\mU$ whose $s_{(\cdot)}$'s lie on this edge. Similarly, let $s'_\alpha$ be a point on its edge and with $\epsilon m$ distance above the lower endpoint of this edge where $\epsilon>0$ and $m$ is the number of vertices in $\mU$ and $\mW$ whose $s'_{(\cdot)}$'s and $t'_{(\cdot)}$'s lie on this edge. The value of $\epsilon$ is small enough such that $s_\alpha$ lies above $s'_\alpha$. The points $r_\delta$ and $r'_\delta$ are defined similarly.

As shown in Fig.~\ref{fig:pseudo_rays}, let $\mS$ ({\it resp.} $\mT$) be the strip defined by the supporting lines of $\mR(s_\alpha,u)$ and $\mR_{s'_\alpha}(s_\alpha,u)$ ({\it resp.} $\mR(r_\delta,v)$ and $\mR_{r'_\delta}(r_\delta,v)$).

\begin{lemma}\label{lem:2}
It is always possible to enlarge the floating edges of $\mV$  and $\mU$ and re-position the vertices which lie under the enlarged edges such that $\sW^{M^*}\bigcap\mS$ and $\sW^{N^*}\bigcap\mT$ are not empty and the new position of vertices of $\mX$ satisfy their visibility relations in the visibility graph.
\end{lemma}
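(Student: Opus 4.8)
\noindent The plan is to prove the two assertions symmetrically: I will stretch the floating edge of $\mV$ to force $\sW^{M^*}\bigcap\mS\ne\emptyset$, and the argument for $\sW^{N^*}\bigcap\mT\ne\emptyset$ is the same after exchanging the roles of $\mU$ and $\mV$. The two enlargements can be carried out together, since stretching the floating edge of $\mV$ only re-places vertices of $\mV$ lying below it while stretching the floating edge of $\mU$ only touches vertices of $\mU$. Recall that at least one of $w_{M^*}=w_M$ and $w_{N^*}=w_N$ holds; in the only case that needs work we may assume $N^*\ne N$, so $w_{M^*}=w_M$ and $\sW^{M^*}=\sW^M$ is already determined by the placement of $\mX$.

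First I would pin down the effect of stretching the floating edge of $\mV$. By Corollary~\ref{cor:3} and Property~\ref{prop:11} its containing edge is a boundary edge of $\mX$ (or the edge $(v_\mu,v_{\mu+1})$, which has no already-placed vertex of $\mX$ below it), and its length was never forced by the tower reconstruction of $\mX$; so we are free to push its lower endpoint outward along its own supporting line by an arbitrary $\Delta>0$ and re-place the vertices of $\mV$ lying below it. The key observation is that $s_\alpha$ sits on a strictly higher edge of $\mV$ (near $FV^A(B,\mV)$) which is left untouched, so the upper bounding line of $\mS$, i.e.\ the supporting line of $\mR(s_\alpha,u)$, stays fixed, whereas $s'_\alpha$ --- which sits at the fixed distance $\epsilon m$ above the lower endpoint of the floating edge --- is translated down the extended edge by $\Delta$. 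Hence the lower bounding line of $\mS$, the supporting line of $\mR_{s'_\alpha}(s_\alpha,u)$, stays parallel to the fixed upper line but recedes from it without bound as $\Delta\to\infty$ (the degenerate case in which the floating edge is parallel to $\mR(s_\alpha,u)$ is removed by an arbitrarily small perturbation of the $\epsilon$'s). Thus $\mS$ grows monotonically and in the limit fills the whole open half-plane strictly on the $s'_\alpha$-side of the line through $s_\alpha$ and $u$.

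It then suffices to exhibit one point of $\sW^{M^*}$ strictly on that side, i.e.\ strictly below the line through $s_\alpha$ and $u$. Here I would use the concavity of $\mV$: $s_\alpha$ lies on $\mV$ strictly above $v_\mu$, so the ray from $u$ through $s_\alpha$ --- where $u=u_\nu$, or, when the floating edge of $\mV$ dips below $\mR(u_\nu,v_\mu)$, the designated point of $\mR_{u_\nu}(s_{\nu+1},u_\nu)$ --- runs above the segment $(u,v)$ at every abscissa to the right of $u$. Since $\sW^{M^*}\subseteq\mF_b(v_\mu,u_\nu)$ (respectively $\subseteq\mF_b(u,v)$ after the refinement made just before the lemma) and the deep part of $\sW^{M^*}$ lies to the right of $u$, the region $\sW^{M^*}$ meets the half-plane below the line through $s_\alpha$ and $u$; and $\sW^{M^*}$ is nonempty, its nonemptiness having been established when the regions $\sW^j$ were introduced via Property~\ref{prop:7}. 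Hence a bounded portion of $\sW^{M^*}$ is engulfed by the growing strip $\mS$ once $\Delta$ is large enough. The identical computation with $\mU$ and $\mV$, the floating edge of $\mU$, and $\mT$ replacing $\mV$, the floating edge of $\mV$, and $\mS$ puts a point of $\sW^{N^*}$ into $\mT$; the extra restriction of $\sW^{N^*}$ below the line through $t'_{N^*}$ and $p$ does no harm, because $p$ was fixed at positive distance below $\mR(u,v)$ inside $\sW^N\bigcap\mF_b(u,v)$, so the restricted region still extends arbitrarily far below the relevant line.

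The remaining point, and the one I expect to be the main obstacle, is that the re-placement must preserve the visibility relations of $\mX$. Because the floating edge is stretched along its own supporting line, one re-runs the tower-style placement of Section~\ref{sec:Tower} on the subchain of $\mV$ lying below the lengthened edge: each such vertex is slid along its containing half-line so that its blocking relations with the (fixed) vertices of $\mU$ come out exactly as prescribed by $\mG$. One then has to check only that the intersections of the defining half-planes remain nonempty under the stretch --- the same sort of nonemptiness statements already used for the $\sW^j$ regions and in Theorem~\ref{thm:tower}, with the $\epsilon$-slacks built into $s_{(\cdot)}$, $s'_{(\cdot)}$ and $t'_{(\cdot)}$ providing the room needed to keep every strict inequality strict. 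Doing this simultaneously for both floating edges is consistent since the two re-placements move disjoint sets of vertices, which completes the argument.
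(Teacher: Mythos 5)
Your proposal is correct and follows essentially the same route as the paper's proof: enlarge the floating edge so that $s'_\alpha$ (hence the lower boundary of $\mS$) recedes while $s_\alpha$ and the defining half-planes of $\sW^{M^*}$ near its apex stay fixed, until the widened strip captures a point of $\sW^{M^*}$ below the line through $s_\alpha$ and $u$, and then re-place the vertices under the stretched edge via the tower-style $r_{(\cdot)}$/$s_{(\cdot)}$ construction to preserve the visibility relations of $\mX$. The only differences are presentational (you argue via the limit $\Delta\to\infty$ and justify the compatibility of the two enlargements by disjointness of the moved vertices, where the paper argues the intersection only grows), and your parenthetical claim that $s_\alpha$ lies on a strictly higher edge than $s'_\alpha$ is not always accurate, though immaterial since $s_\alpha$ is anchored to the fixed upper endpoint either way.
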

\begin{proof}
Assume that the intersection of $\sW^{M^*}$ and $\mS$ is empty. According to the definition of $M^*$, the apex of $\sW^{M^*}$ either lies above the floating edge of $\mV$ or lies on $\mU$.  
This implies that enlarging the floating edge of $\mV$ only affects half-plane $\mF_b(v_\mu,u_\nu)$ that defines up-side of $\sW^{M^*}$. 
Then, we can enlarge the floating edge of $\mV$ in such a way that the lower defining ray of $\mS$ and the upper defining half-plane of $\sW^{M^*}$ intersect inside $\sW^{M^*}$ which means that the intersection of $\sW^{M^*}$ and $\mS$ is not empty. Moreover, when this intersection is not empty, this extension will just increase the intersection. On the other hand, enlarging this edge changes the position of vertices of $\mX$ which lie under this edge. For these vertices, we have their corresponding points $r$'s. By enlarging the floating edge of $\mV$, the new positions will be computed according to their definition (for a vertex $v_i$ it must lie on the supporting line of $r_i$ and $v_{i-1}$) to satisfy the visibility relations in the visibility graph reduced to vertices of $\mX$. To complete the proof, it is simple to see that extending the floating edge of $\mU$ will again increase the intersection of $\mW^{M^*}$ and $\mS$.

The proof for $w_{N^*}$ is analogously the same. \qed
\end{proof}

\begin{figure}
\begin{center}
\begin{tikzpicture}[thick,y=0.6cm,scale=0.7]
\draw (-2.25,-0.5) -- (0,4) -- (2.25,-0.5);
\draw (1.5,-2) -- (1,2) -- (0.5,-2);
\draw (-1.2,-2) -- (-1,2) -- (-0.5,-2);
\draw (-1.5,-2) -- (-1.5,1) -- (-2,-2);
\draw [dashed] (3,-0.5) -- (-2.25,-0.5);
\node [right] at (3,-0.5) {$\mR(u,v)$};
\draw [dashed] (-2.1,-0.2) -- (3,-2.5);
\node [above] at (3.2,-2.5) {$\mR(t'_{N^*},p)$};
\draw [fill] (-2.25,-0.5) circle [radius=.05];
\node [below] at (-2.25,-0.5) {$u$};
\draw [fill] (2.25,-0.5) circle [radius=.05];
\node [below] at (2.25,-0.5) {$v$};
\draw [fill] (-2.1,-0.2) circle [radius=.05];
\node [left] at (-2,0) {$t'_{N^*}$};
\draw [fill] (-0.9,-0.75) circle [radius=.05];
\node [below] at (-0.9,-0.75) {$p$};
\path [fill=black,opacity=0.1] (0.5,-2) -- (0.55,-1.4) -- (1.5,-1.8) -- (1.5,-2);
\node [below] at (-2,-2) {$\sW^{M}$};
\path [fill=black,opacity=0.1] (-1.2,-2) -- (-1.12,-0.5) -- (-0.7,-0.5) -- (-0.5,-2);
\node [below] at (-0.75,-2) {$\sW^{N}$};
\path [fill=black,opacity=0.1] (-2,-2) -- (-1.75,-0.5) -- (-1.5,-0.5) -- (-1.5,-2);
\node [below] at (1,-2) {$\sW^{N^*}$};
\end{tikzpicture}
\end{center}
\vspace{-7mm}
\caption{Restricting $\sW^{N^*}$.}
\label{fig:w_regions}
\end{figure}  

After locating the position of vertices in $\mX$ (by possibly extending the floating edges), we place the vertices of $\mY$ as follows: If $N^*\neq N$, then we set $p$ as $w_N$ and place $w_{M^*}=w_M$ inside the intersection of $\sW^M$ and $\mS$ in such a way that both $w_M$ and $w_N$ be visible to $u$ and $v$; neither $w_M$ blocks the visibility of $w_N$, nor $w_N$ blocks the visibility of $w_M$. 
When $M^*\neq M$, $w_M$ and $w_N$ are positioned analogously. Finally, if $M^*=M$ and $N^*=N$, we select a point from $\mS\cap\sW^M$ as $w_M$ and a point from $\mT\cap\sW^N$ as $w_N$ again in such a way that both see $u$ and $v$.
Then, we put the vertices $w_{M+1},\ldots,w_{N-1}$ on a slightly concave chain from $w_M$ to $w_N$ in such a way that each $w_j$ ($M\leq j\leq N$) lies inside $\sW^j$ and sees $u$ and $v$.

Based on the definition of $\sW^i$'s regions and the specified positions of vertices inside these regions, this setting is compatible with the visibility graph restricted to the vertices of $\mX$ and $\mY$.
\vspace{-2mm}
\subsection{Reconstructing $\mZ$ and $\mZ'$}\label{sec:recontructZ}
\vspace{-2mm}
In this step, we place the vertices of $\mZ$ and $\mZ'$ to complete the reconstruction procedure. 
As said before, $\mZ$ (resp. $\mZ'$) is a part of the target pseudo-triangle with $<u_\nu,u_{\nu+1},\ldots,B,\ldots,w_M>$ (resp. $<v_\mu,v_{\mu+1},\ldots,C,\ldots,w_N>$) boundary vertices. Here, we only describe how to build $\mZ$. The construction of $\mZ'$ is symmetrically the same.

Location of a vertex $u_i\in\mZ$ is determined by the intersection point of the rays $\mR(s_i,u_{i-1})$ and $\mR(s'_i,LV^B(u_i,\mW))$ and location of a vertex $w_h\in\mZ$ is an arbitrary point on $\mR(t'_h,w_{h+1})$ inside the region $\sW^h$. Therefore, to construct $\mZ$ we start from $u_{\nu+1}$ and $w_{M-1}$, and in each step we determine the position of one of the vertices and go forward to the next vertex. This is done by incrementally determining direction of the rays $\mR(s_i,u_{i-1})$, $\mR(s'_i,LV^B(u_i,\mW))$, and $\mR(t'_h,w_{h+1})$ as well as $\sW^h$ regions.

Consider the edges of the pseudo-triangle on which the points $s_i$, $s'_i$, $r_j$, $r'_j$, and $t'_l$ for $i>\nu$, $j>\mu$, and $l<M$ and $l>N$ lie. 
Keep an upper point and a lower point for each edge. Initialize the upper point with the upper endpoint of that edge or the latest located $s_{(\cdot)}$ or $r_{(\cdot)}$ on this edge. Initialize the lower point with the lower endpoint of the edge.
Position of each $s_{(\cdot)}$, $r_{(\cdot)}$, $s'_{(\cdot)}$, $r'_{(\cdot)}$, and $t'_{(\cdot)}$ is determined whenever we need the rays passing through them. 
We place the points $s'_{(\cdot)}$, $r'_{(\cdot)}$, and $t'_{(\cdot)}$, with $\epsilon>0$ distance above the current lower point of their edges and place the points $s_{(\cdot)}$ and $r_{(\cdot)}$, with $\epsilon>0$ distance below the upper point of their edges. Whenever a new $s_{(\cdot)}$, $r_{(\cdot)}$, $s'_{(\cdot)}$, $r'_{(\cdot)}$, or $t'_{(\cdot)}$ point is located on an edge, the upper or lower point of that edge is updated properly.

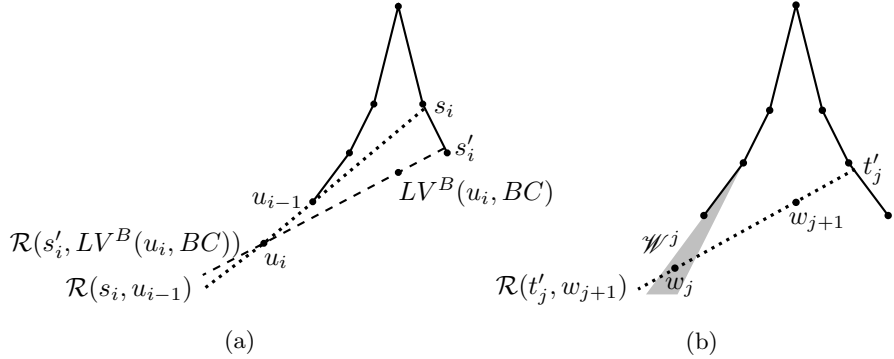
\begin{figure}
\begin{center}
\begin{subfigure}{.5\textwidth}
\begin{tikzpicture}[thick,scale=0.65]
\draw [dashed] (0.95,-0.9) -- (-2.75,-2.85) -- (-4,-3.5);
\draw [very thick,dotted] (0.525,-0.1) -- (-2.75,-2.85) -- (-4,-3.8);
\draw  (1,-1) -- (0.5,0) -- (0,2) -- (-0.5,0) -- (-1,-1) -- (-1.75,-2);
\draw [fill] (0,2) circle [radius=.05];
\draw [fill] (-0.5,0) circle [radius=.05];
\draw [fill] (-1,-1) circle [radius=.05];
\draw [fill] (-1.75,-2) circle [radius=.05];
\node [left] at (-1.75,-2) {$u_{i-1}$};
\draw [fill] (0,-1.4) circle [radius=.05];
\node [right] at (-0.2,-1.8) {$LV^B(u_i,\mW)$};
\draw [fill] (1,-1) circle [radius=.05];
\draw [fill] (0.5,0) circle [radius=.05];
\node [right] at (0.95,-0.9) {$s'_i$};
\node [right] at (0.525,-0.1) {$s_i$};
\draw [fill] (-2.75,-2.85) circle [radius=.05];
\node [below] at (-2.5,-2.85) {$u_i$};
\node [left] at (-3,-2.85) {$\mR(s'_i,LV^B(u_i,\mW))$};
\node [left] at (-4,-3.8) {$\mR(s_i,u_{i-1})$};
\end{tikzpicture} 
\caption{}
\label{fig:pseudo_u_i}
\end{subfigure}
\begin{subfigure}{.45\textwidth}
\begin{tikzpicture}[thick,scale=0.7]
\draw (0,2) -- (-0.5,0) -- (-1,-1) -- (-1.75,-2)  (1.75,-2) -- (1,-1) -- (0.5,0) -- (0,2);
\draw [fill] (0,2) circle [radius=.05];
\draw [fill] (-0.5,0) circle [radius=.05];
\draw [fill] (-1,-1) circle [radius=.05];
\draw [fill] (-1.75,-2) circle [radius=.05];
\draw [fill] (-2.3,-3) circle [radius=.05];
\node [below] at (-2.2,-3) {$w_j$};
\draw [fill] (0,-1.75) circle [radius=.05];
\node [below] at (0.4,-1.75) {$w_{j+1}$};
\draw [fill] (1.75,-2) circle [radius=.05];
\draw [fill] (1,-1) circle [radius=.05];
\draw [fill] (0.5,0) circle [radius=.05];
\node [right] at (1.09,-1.12) {$t'_j$};
\draw [very thick,dotted] (1.12,-1.12) -- (-3,-3.4);
\path [fill=black,opacity=0.25] (-1,-1) -- (-2.85,-3.5) -- (-2.25,-3.5);
\node [left] at (-3,-3.4) {$\mR(t'_j,w_{j+1})$};
\node [left] at (-2,-2.5) {$\sW^j$};
\end{tikzpicture} 
\caption{}
\label{fig:pseudo_w_j}
\end{subfigure}
\caption{(a) Determining $u_i$, (b) Determining $w_j$.}
\label{F8}
\end{center}
\end{figure}

More precisely, assume that we have already determined positions of vertices $u_\nu,u_{\nu+1},\ldots,u_{i-1}$ ($i>\nu$) as well as the vertices $w_M,w_{M-1},\ldots,w_{j+1}$ ($j<M$). To determine position of one of the vertices $u_i$ and $w_j$ we do as follows: Let $w_k$ be $LV^B(u_i,\mW)$. If $k<j$, then we have already located the position of $w_k$, and directions of the rays $\mR(s_i,u_{i-1})$ and $\mR(s'_i,LV^B(u_i,\mW))$ are known. We will show in Lemma~\ref{lem:3} that these rays intersect. So, $u_i$ is located on the intersection point of these rays (Fig.~\ref{fig:pseudo_u_i}). Otherwise, we must first determine position of $w_j$ which lies on $\mR(t'_j,w_{j+1})$ and inside $\sW^j$ (Fig.~\ref{fig:pseudo_w_j}). The position of $w_{j+1}$ is already known and $t'_{j+1}$ is determined according to the above paragraph. From these two points the direction of $\mR(t'_j,w_{j+1})$ is obtained. The region $\sW^j$ is determined as follows:
Suppose that $FV^A(w_j,\mU)=u_k$ and $FV^A(w_j,\mV)=v_l$. We define $\sW^{j}$ as in the previous section with the exception that it may be possible that only one of the vertices $u_k$ and $v_l$ exists. By Corollary~\ref{cor:4}, for $j<M$, $u_j$ always exists.
If $w_j$ sees no vertex from $\mV$, then it would see a part of the floating edge of $\mV$. Hence, we consider the upper endpoint of this edge as $v_{l-1}$. From properties~\ref{prop:7} and \ref{prop:11} we know that $\sW^j$ is not empty and lies to the left of $\sW^{j+1}$. Moreover, it will be shown in Lemma~\ref{lem:3} that $\mR(t'_j,w_{j+1})$ intersects $\mR_{s'_\alpha}(s_\alpha,u)$. Since $\mR_{s'_\alpha}(s_\alpha,u)$ passes through all $\sW^{(\cdot)}$, $\mR(t'_j,w_{j+1})$ passes through $\sW^j$.
Therefore, we can determine the position of $w_j$. 

Note that the definition of $s_i$'s and $t'_j$'s enforce the concavity of the vertices on $\mU$ and $\mW$, respectively. According to the definition of $\mR(s_i,u_{i-1})$ and $\mR(s'_i,LV^B(u_i,\mW))$ for $u_i$ and $\mR(t'_j,w_{j+1})$ and $\sW^j$ for $w_j$, in both cases (locating $u_i$ or $w_j$), visibility of the newly located vertex is exactly the same as its visibility in the visibility graph (restricted to the vertices of $\mX$, $\mY$, and the constructed part of $\mZ$). This means that at the end of this construction which vertices $B$ and $C$ are located the visibility graph of the constructed polygon is consistent with the input visibility graph.

\begin{lemma}\label{lem:3}
The rays $\mR(s_i,u_{i-1})$ and $\mR(s'_i,LV^B(u_i,\mW))$ for $i>\nu$ are convergent inside $\mS$. 
\end{lemma}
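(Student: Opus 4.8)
The plan is an induction along the order in which the reconstruction of $\mZ$ places its vertices, carrying the strengthened hypothesis that \emph{every $\mU$-vertex already placed in $\mZ$ lies inside the strip $\mS$ and every $\mW$-vertex already placed in $\mZ$ lies inside its region $\sW^{(\cdot)}$}. Consider the step that places $u_i$ (for $\nu<i\le\alpha$, where $u_\alpha=B$), and recall the geometric meaning of the four points determining the two rays: $s_i$ lies on the segment $(P^A(FV^A(u_i,\mV)),FV^A(u_i,\mV))$ of $\mV$ and $s'_i$ lies on the segment $(LV^A(u_i,\mV),N^A(LV^A(u_i,\mV)))$ of $\mV$, and these are exactly the points through which the reconstructed edge $u_{i-1}u_i$ and the reconstructed segment $u_i\,w_k$ --- with $w_k=LV^B(u_i,\mW)$ --- are forced to pass; moreover $u_{i-1}\in\mU$ and $w_k\in\mW$ are already placed (this is the case $k<j$ in the algorithm's description). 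So $\mR(s_i,u_{i-1})$ and $\mR(s'_i,LV^B(u_i,\mW))$ are both well defined at this step, and the point we must exhibit is precisely their intersection; for $i=\nu+1$ the only positions used are those of $u_\nu$ and the auxiliary point $u$, fixed during the reconstruction of $\mX$ and $\mY$, so the base case is immediate.

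The first substantive step is to locate the two rays relative to $\mS$. By construction the upper bounding line of $\mS$ passes through $s_\alpha$ and $u$, and the lower bounding line is the parallel through $s'_\alpha$. By Corollary~\ref{cor:1} with the roles of $\mU$ and $\mV$ exchanged, together with its mirror image with respect to the corner $C$, the indices $Ind^A(FV^A(u_i,\mV))$ and $Ind^A(LV^A(u_i,\mV))$ are non-decreasing in $i$; hence in the reconstruction $s_i$ lies on $\mV$ weakly above $s_\alpha$ and $s'_i$ weakly above $s'_\alpha$. Since $\mV$ is concave, neither $s_i$ nor $s'_i$ lies strictly on the $\mU$-side of the corresponding bounding line of $\mS$. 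Consequently each of the two rays --- directed from a point of $\mV$ toward a point that the inductive hypothesis places inside $\mS$: $u_{i-1}$ for the first ray, and $w_k$ for the second (which lies in some region $\sW^{(\cdot)}$, and all regions $\sW^{(\cdot)}$ are swept by the lower bounding ray $\mR_{s'_\alpha}(s_\alpha,u)$ of $\mS$, so $w_k$ lies in or just below $\mS$) --- meets the interior of $\mS$.

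It remains to show the two rays actually cross, at a point still between the bounding lines. Parametrise both rays by the coordinate along the strip direction and compare their transverse positions at the two ends of $\mS$: near the upper bounding line $\mR(s_i,u_{i-1})$ enters close to the ``shadow'' of $s_i$, while $\mR(s'_i,LV^B(u_i,\mW))$, aimed more steeply toward $\mW$, is on a definite side of it; near the $u$-end the order is reversed, since $u_{i-1}$ is the next-higher vertex of $\mU$ and lies strictly on the $\mU$-side of the line that the second ray must still continue to. An intermediate-value argument then produces a crossing; by the inductive hypothesis $u_{i-1}\in\mS$, and the concavity of $\mU$ together with the definition of the bounding lines of $\mS$ prevents the first ray from leaving $\mS$ before the crossing, so the crossing --- which we identify with $u_i$ --- lies inside $\mS$, closing the induction. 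Two degenerate cases are handled the same way: when $u_i$ sees no vertex of $\mV$, the algorithm places $s_i,s'_i$ on the edge associated with the highest-indexed vertex of $\mU$ that does see $\mV$; and when Lemma~\ref{lem:2} has enlarged the floating edge of $\mV$, this only widens the half-plane $\mF_b(v_\mu,u_\nu)$ bounding the top of the regions involved and can only enlarge the relevant intersections. By symmetry the same reasoning gives the companion statement for $\mT$, $\mV$ and $\mW$.

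The step I expect to be the main obstacle is the coupling with the $\mW$-reconstruction in $\mZ$. The direction of $\mR(s'_i,LV^B(u_i,\mW))$ depends on the position already assigned to $w_k=LV^B(u_i,\mW)$, and one must ensure this position is consistent with $s'_i$ sitting just past $LV^A(u_i,\mV)$ on $\mV$ --- i.e., that $w_k$ falls on the side of the supporting line of $u_i$ that makes the two rays converge rather than diverge. This is exactly what Property~\ref{prop:11}(b) (monotone behaviour of $LV^A(w,\mV)$ as $w$ moves along $\mW$) and Corollary~\ref{cor:5} provide, together with the already-noted fact that the lower boundary $\mR_{s'_\alpha}(s_\alpha,u)$ of $\mS$ passes through every $\sW^{(\cdot)}$, which pins down the location of each placed $w$-vertex relative to $\mS$. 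With those monotonicities in hand the crossing argument applies uniformly for all $i$ with $\nu<i\le\alpha$. \qed
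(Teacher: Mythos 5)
Your overall skeleton matches the paper's: position $s_i$ and $s'_i$ relative to the strip $\mS$, show each ray enters $\mS$ (one across the upper bounding line, one across the lower), and close an induction using the fact that the already-placed vertices $u_{i-1}$ and $LV^B(u_i,\mW)$ lie in controlled positions. But there is a concrete error in the positioning step that the rest of your argument leans on. You claim that $Ind^A(LV^A(u_i,\mV))$ is non-decreasing in $i$ and conclude that $s'_i$ lies weakly above $s'_\alpha$, i.e.\ on or inside the strip. This is backwards. Corollary~\ref{cor:3} says that for $i>\nu$ the set of $\mV$-vertices visible from $u_i$ is nested \emph{decreasing} in $i$; combined with Corollary~\ref{cor:1} this forces $LV^A(u_i,\mV)$ to move \emph{toward} $A$ as $i$ grows, so the floating edge of $\mV$ (the containing edge of $s'_\alpha$) is the highest such edge and every $s'_i$ with $i<\alpha$ lies on or below it --- the paper's proof explicitly needs ``$s_i$ lies above the strip and $s'_i$ lies \emph{below} the strip.'' With your placement of $s'_i$ inside or above $\mS$, the ray $\mR(s'_i,LV^B(u_i,\mW))$ is no longer forced to cross the lower bounding line $\mR_{s'_\alpha}(s_\alpha,u)$ from below, and the two rays no longer sandwich the strip, so the crossing need not occur inside $\mS$ (or on the forward portions of the rays at all).

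A second, related gap: you dispose of the position of $w_k=LV^B(u_i,\mW)$ by saying it lies ``in or just below $\mS$'' because every region $\sW^{(\cdot)}$ meets the lower bounding ray. If $w_k$ and $s'_i$ are both below the strip, the ray through them may miss the strip entirely, so this is not enough. The paper handles this point with a case split: for $M^*\le h\le M$ it shows $w_h$ lies above $\mR(t'_{M^*},w_M)$ while $s'_i$ lies below $t'_{M^*}$, and uses the fact that $\mR(t'_{M^*},w_M)$ itself crosses $\mR_{s'_\alpha}(s_\alpha,u)$; for $h<M^*$ it shows inductively that $w_h$ lies inside $\mS$. Your final ``intermediate-value'' paragraph would become unnecessary once these two facts are in place, since a ray entering the strip across its upper line and a ray entering across its lower line, both directed from the $\mV$-side toward $\mU$ and $\mW$, must meet between the two lines. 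I recommend you correct the direction of the $LV^A$ monotonicity, restate the positioning as ``$s_i$ above $\mS$, $s'_i$ below $\mS$,'' and replace the sweep claim about $\sW^{(\cdot)}$ with the explicit comparison against $t'_{M^*}$ and $w_{M^*}$.
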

\begin{proof}
Remember that $\mS$ is the strip defined by the supporting lines of $\mR(s_\alpha,u)$ and $\mR_{s'_\alpha}(s_\alpha,u)$.
By Corollaries~\ref{cor:1} and~\ref{cor:3}, we know that $s_i$ lies above the strip $\mS$ and $s'_i$ lies below this strip. Then, it is enough to show that for $i>\nu$, $\mR(s'_i,LV^B(u_i,\mW))$ crosses $\mR_{s'_\alpha}(s_\alpha,u)$ and $\mR(s_i,u_{i-1})$ crosses $\mR(s_\alpha,u)$. We first prove that $\mR_{s'_\alpha}(s_\alpha,u)$ intersects $\mR(s'_i,LV^B(u_i,\mW))$.  
Let $LV^B(u_i,\mW)=w_h$. For $M^*\leq h\leq M$, it can be easily shown by induction that $w_h$ is located above $\mR(t'_{M^*},w_M)$. Moreover, it is simple to see that $s'_i$ must lie below $t'_{M^*}$. Then, knowing that $\mR(t'_{M^*},w_M)$ crosses $\mR_{s'_\alpha}(s_\alpha,u)$ implies that $\mR(s'_i,w_h)$ intersects $\mR_{s'_\alpha}(s_\alpha,u)$ as well. From the fact that $w_{M^*}$ lies inside $\mS$, it can also be shown by induction that $w_h$ for $h<M^*$ lies inside $\mS$ which means that $\mR(s'_i,w_h)$ crosses $\mR_{s'_\alpha}(s_\alpha,u)$.

To complete the proof, we prove by induction on $i$ that $\mR(s_i,u_{i-1})$ crosses $\mR(s_\alpha,u)$. It is clear that $s_{\nu+1}$ is located above $s_\alpha$ which means that $\mR(s_{\nu+1},u_\nu)$ intersects $\mR(s_\alpha,u)$.
From the previous paragraph we know that $\mR_{s'_\alpha}(s_\alpha,u)$ intersects $\mR(s'_{\nu+1},LV^B(u_{\nu+1},\mW))$. Therefore, $\mR(s_{\nu+1},u_\nu)$ and $\mR(s'_{\nu+1},LV^B(u_{\nu+1},\mW))$ will intersect at a point within $\mS$. Since we put $u_{\nu+1}$ at this intersection point, as the induction step, assume that $u_{i-1}$ lies inside $\mS$ where $i>\nu+1$. Then, $\mR(s_i,u_{i-1})$ intersects $\mR(s_\alpha,u)$. \qed
\end{proof}

\vspace{-2mm}
\section{Analysis}
\vspace{-2mm}
In previous sections, we proved several properties on the visibility graph of a pseudo-triangle and proposed an algorithm that constructs a pseudo-triangle for a given pair of visibility graph $\mG(V,E)$ and Hamiltonian cycle $\mH$ when this pair supports these properties. In this section, we analyze the time complexity of algorithms required to check these properties and the running time of the reconstruction algorithm.

To check Property~\ref{prop:1}, we need a linear time trace on vertices of $\mG$ according to their order in $\mH$. This is done in $\mO(n)$ time. If two corners are identified in this way, the existence of a tower polygon corresponding to the pair of $\mG$ and $\mH$ can also be verified in linear time~\cite{CoLuSp97}. 
Property~\ref{prop:2} can be verified in $\mO(|E|)$ by a simple trace of the edge list of the visibility graph. Precisely, for each vertex $u\in\mU$ we maintain the minimum index, maximum index, and number of vertices of the other side-chains $\mV$ and $\mW$ which are visible from $u$. After finishing this trace, from these triple of parameters (minimum index, maximum index, number of visible vertices) the Property~\ref{prop:2} is checked in $\mO(n)$ time. In order to verify the rest of the properties, it is required to know the visible subchains from each vertex. These subchains are obtained as by-products using the method proposed for checking Property~\ref{prop:2}. Having these subchains for each vertex, Property~\ref{prop:3} can be verified in $\mO(n)$. 

In Property~\ref{prop:689}, for each pair $(\mU,\mV)$ of side-chains, we must find all pairs of visible vertices $(u\in\mU,v\in\mV)$ such that $(N^A(u),v)$ and $(u,N^A(v))$ are invisible. Having the visible subchains for each vertex, this check is done in constant time for each edge $(u,v)\in E$. Therefore, all pairs of vertices $(u\in\mU,v\in\mV)$ satisfying assumption of this property can be obtained in $\mO(|E|)$. Then, for each pair the three necessary conditions are checked in constant time using the maintained visible subchains of $u$ and $v$ vertices.
Checking properties~\ref{prop:7},~\ref{prop:10} and~\ref{prop:11} can be done in $\mO(n)$ by simple trace on the side-chains. Therefore, all properties can be verified in $\mO(|E|)$.

To complete the analysis, we compute the running time of the reconstruction algorithm presented in Section~\ref{sec:reconstruction}. Assume that $\mG$ satisfies all of the properties introduced in Section~\ref{sec:Properties} and we know the visible subchains of each vertex according to their order in $\mH$. 
The side-chains of the target pseudo-triangle are identified in linear time according to the algorithm described in the proof of Lemma~\ref{lem:1}.
Reconstructing $\mX$ is done using the tower reconstruction algorithm whose running time is linear in terms of the number of edges in the visibility graph reduced to $\mX$. To reconstruct $\mY$, the algorithm needs to determine the floating edges of $\mU$ and $\mV$ which can be done in constant time. Computing the $\sW$-type regions (for each vertex $w_i\in\mW$) and determining the vertices $w_{N^*}$ and $w_{M^*}$ needs $\mO(n)$ time. If the conditions of Lemma~\ref{lem:2} are not satisfied, the floating edges of $\mU$ and $\mV$ must be extended which is done in $\mO(1)$: A lower bound for the increase in floating edges can be computed by using Thales' theorem and trigonometric functions. Locating each vertex of $\mY$ is also done in constant time. Finally, placing each vertex of $\mZ$ and $\mZ'$ takes constant time, as well. Therefore, the total running time of the algorithm is $\mO(|E|)$. We can combine all results as:

\begin{theorem}
The visibility graph and the boundary vertices of a pseudo-triangle satisfy properties~\ref{prop:1} to \ref{prop:11}, and conversely, for any pair of graph $\mG$ and Hamiltonian cycle $\mH$ satisfying these properties, there is a pseudo-triangle $\mP$ whose visibility graph and boundary vertices are respectively isomorphic to $\mG$ and $\mH$. Checking these properties and reconstructing such a polygon can be done in $\mO(|E|)$.
\end{theorem}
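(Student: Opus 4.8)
The plan is to assemble the statement from Sections~\ref{sec:Tower}--\ref{sec:reconstruction}, treating necessity, sufficiency and the running-time bound in turn. For necessity there is essentially nothing left to do: each of Properties~\ref{prop:1}--\ref{prop:11} was stated and established under the hypothesis that $\mH$ and $\mG$ are the Hamiltonian cycle and visibility graph of a pseudo-triangle, so the first half of the theorem is just their conjunction; I would only recall the remark at the end of Section~\ref{sec:Properties} that these properties subsume Ghosh's four necessary conditions.

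For sufficiency I would argue that the algorithm of Section~\ref{sec:reconstruction} is a constructive proof and verify its correctness stage by stage. By Property~\ref{prop:1}, either only two corners are detectable --- in which case $(\mH,\mG)$ belongs to a pseudo-triangle iff it belongs to a tower polygon, and we are done by Theorem~\ref{thm:tower} --- or all three corners are detectable and we run the four-part construction. In the latter case: (i) $\mX$ is a tower whose bipartite part carries a strong ordering, this ordering being exactly what Property~\ref{prop:2} together with Corollaries~\ref{cor:2:2} and~\ref{cor:2:3} supply, so by Theorem~\ref{thm:tower} its reconstruction realizes $\mG$ restricted to $\mX$; (ii) for every $w_j\in\mW$ the region $\sW^j$ is nonempty (using Properties~\ref{prop:2},~\ref{prop:7},~\ref{prop:11} and Corollary~\ref{cor:5}) and is designed so that a vertex placed in it reproduces exactly the edges of $\mG$ from $w_j$ to the vertices of $\mX$; (iii) Lemma~\ref{lem:2} lets us enlarge the floating edges of $\mU$ and $\mV$ until $\sW^{M^*}\cap\mS$ and $\sW^{N^*}\cap\mT$ are nonempty while the re-positioned vertices of $\mX$ still satisfy their visibility relations in $\mG$; (iv) Lemma~\ref{lem:3} shows the rays $\mR(s_i,u_{i-1})$ and $\mR(s'_i,LV^B(u_i,\mW))$ converge inside $\mS$, so each vertex of $\mZ$, and symmetrically of $\mZ'$, can be placed at the prescribed intersection point inside the prescribed $\sW$-region, once more matching $\mG$ locally. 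Since the defining points $s_i,s'_i,r_j,r'_j,t'_l$ enforce concavity of all three side-chains, the output is a genuine pseudo-triangle, and since every vertex reproduces its visibility list as it is inserted, the visibility graph of the output is isomorphic to $\mG$ with boundary order $\mH$.

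For the complexity claim I would collect the bounds from the analysis. A single scan of $\mH$ checks Property~\ref{prop:1} and, if only two corners appear, invokes the linear-time tower test of Theorem~\ref{thm:tower}. One pass over the edge list records, for each vertex, the minimum index, maximum index and number of visible vertices on each other side-chain --- hence the visible subchains --- in $\mO(|E|)$; from these, Properties~\ref{prop:2},~\ref{prop:3},~\ref{prop:689},~\ref{prop:7},~\ref{prop:10} and~\ref{prop:11} are each verified in $\mO(|E|)$ or $\mO(n)$. The reconstruction runs the tower subroutine on $\mX$ in time linear in the edges of $\mG$ inside $\mX$, spends $\mO(n)$ on the $\sW$-regions and the indices $M^*,N^*$, extends each floating edge in $\mO(1)$ (a sufficient increase being computable by Thales' theorem) when Lemma~\ref{lem:2} requires it, and places each remaining vertex of $\mY$, $\mZ$ and $\mZ'$ in $\mO(1)$; the total is $\mO(|E|)$.

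The step I expect to be the real obstacle, and the one deserving the most care in the write-up, is showing that the four subpolygons can be built \emph{simultaneously} consistently: the $\sW$-regions of the $\mW$-vertices must stay nonempty after the floating edges of $\mU$ and $\mV$ have been stretched to make room for $\mZ$ and $\mZ'$, and the stretching must not disturb the visibility relations already realized inside $\mX$ and $\mY$. This is exactly what Lemmas~\ref{lem:2} and~\ref{lem:3} are for, so the heart of the sufficiency proof is the verification that their hypotheses always hold under Properties~\ref{prop:1}--\ref{prop:11}; the rest is bookkeeping over the stage-by-stage construction. \qed
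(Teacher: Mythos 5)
Your proposal is correct and follows essentially the same route as the paper, which proves this theorem simply by combining the necessity arguments of Section~\ref{sec:Properties}, the constructive reconstruction of Section~\ref{sec:reconstruction} (resting on Theorem~\ref{thm:tower} and Lemmas~\ref{lem:2} and~\ref{lem:3}), and the running-time accounting of the analysis section. Your assembly of these three ingredients, including the emphasis on Lemmas~\ref{lem:2} and~\ref{lem:3} as the crux of consistency across the four subpolygons, matches the paper's own argument.
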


\section{Conclusion}\label{sec:Conclusion}

In this paper, we considered properties of the visibility graph of a pseudo-triangle and obtained a set of necessary and sufficient conditions that such graphs must have. Then, we propose an algorithm to reconstruct a polygon from a given visibility graph which supports these properties. This characterizing and reconstructing problem, despite its long history, is still at the start of its way to be solved for all polygons.

\bibliographystyle{elsarticle-num}
\bibliography{bibliography}

\end{document}